\definecolor{DarkBlue}{RGB}{0,0,150}
\DeclareMathOperator{\E}{\mathbb{E}}
\renewcommand{\Pr}{Pr}
\newcommand{\Prob}{\Pr} 
\newcommand{\tran}{{Tr}} 
\newcommand{\D}{\mathcal{D}}
\newcommand{\eqalg}{\Pi}
\newcommand{\type}{\theta}
\newcommand{\types}{{\type}} 
\newcommand{\typeSpace}{\Theta} 
\newcommand{\alloc}{\mathbf{x}}
\newcommand{\bids}{a}
\newtheorem{definition}{Definition}[section]
\newenvironment{proof}{\noindent\textit{Proof.}}{\hfill$\square$\vspace{1em}\\}
\newcommand{\kibitz}[2]{\ifnum\Comments=1{\color{#1}{#2}}\fi}
\newcommand{\gb}[1]{\kibitz{teal}{[GB: #1]}}
\newcommand{\dcp}[1]{\kibitz{red}{[DCP: #1]}}
\newcommand{\kibitzAdd}[2]{\ifnum\CommentsAdd=1{\color{#1}{#2}}\fi}
\definecolor{english}{rgb}{0.0, 0.5, 0.0}
\begin{document}
	
\title{Stackelberg POMDP: A Reinforcement Learning Approach for Economic Design}

\author[1]{Gianluca Brero$^*$}
\author[2]{Alon Eden$^*$}
\author[3]{Darshan Chakrabarti}
\author[4]{Matthias Gerstgrasser}
\author[5]{Amy Greenwald}
\author[6]{Vincent Li}
\author[7]{David C. Parkes}

\affil[1]{Data Science Initiative, Brown University}
\affil[2]{School of Computer Science and Engineering, Hebrew University of Jerusalem}
\affil[3]{Department of Industrial Engineering and Operations Research, Columbia University}
\affil[5]{Department of Computer Science, Brown University}
\affil[4,6,7]{John A. Paulson School of Engineering and Applied Sciences, Harvard University}

\affil[1,5]{\small \texttt {\{gianluca\_brero,amy\_greenwald\}@brown.edu}}
\affil[2]{\small \texttt {alon.eden@mail.huji.ac.il}}
\affil[3]{\small \texttt {dc3595@columbia.edu}}
\affil[4,7]{\small \texttt {\{matthias,parkes\}@g.harvard.edu}}
\affil[6]{\small \texttt {vincentli@college.harvard.edu}}

\date{\today}

\maketitle
\def\thefootnote{*}\footnotetext{These authors contributed equally to this work. This is the extended version of \citet{brero2021learning}, which was presented at the ICML Workshop for Reinforcement Learning Theory (ICML 2021).}
\def\thefootnote{\arabic{footnote}}

\begin{abstract}
    We introduce a reinforcement learning framework for economic design where the interaction between the environment designer and the participants is modeled as a Stackelberg game. In this game, the designer (leader) sets up the rules of the economic system, while the participants (followers) respond strategically. We integrate algorithms for determining followers' response strategies into the leader's learning environment, providing a formulation of the leader's learning problem as a POMDP that we call the \textit{Stackelberg POMDP}. We prove that the optimal leader's strategy in the Stackelberg game is the optimal policy in our Stackelberg POMDP under a limited set of possible policies, establishing a connection between solving POMDPs and Stackelberg games.
    We solve our POMDP under a limited set of policy options via the centralized training with decentralized execution framework.
    For the specific case of followers that are modeled as no-regret learners, we solve an array of increasingly complex settings, including problems of indirect mechanism design where there is turn-taking and limited communication by agents. We demonstrate the effectiveness of our training framework through ablation studies. We also give convergence results for no-regret learners to a Bayesian version of a coarse-correlated equilibrium, extending known results to correlated types.
\end{abstract}

\section{Introduction}\label{sec:intro}
A digital transformation is turning markets into {algorithmic platforms} that make use of complex mechanisms to facilitate the matching of supply and demand. {Examples include Amazon's marketplace, Uber, GrubHub, and AirBnB.} These market platforms allow for an unprecedented level of control {over mechanisms for matching and pricing.} However, \textit{mechanism design}---the engineering side of economic theory---does not {always provide enough guidance in regard to how to design these markets to} achieve properties of interest, such as economic efficiency or other specific objectives.

Mechanism design 
theory has mainly focused on \textit{direct revelation mechanisms}, where participants {directly report their preferences and in an incentive-aligned way}. {While in theory this is without loss of generality due to the \textit{revelation principle} \citep{laffont1993theory}, in
practice most mechanisms deployed on market 
platforms are {\em indirect}---participants respond to product choices and prices, and do not communicate  directly about their preferences.}  
This {emphasis on indirect mechanisms}
gives rise to the   need to consider the  behaviors of agents that is
induced by the  rules of a mechanism: {no longer can one simply look for  direct
mechanisms in which it is incentive compatible to make truthful reports of preferences.} {Rather, we need to ask, for example, how agents will make use of simpler messaging schemes and respond to 
an array of products and prices.}

{This problem of indirect mechanism design can be modeled as one of finding an optimal leader strategy in a {\em Stackelberg game}.} In this game, the strategy of the leader corresponds to a commitment to the rules of the mechanism that is used for pricing and allocation, and the followers correspond to the market participants who respond {to the induced economic environment}. We introduce a new methodology for finding optimal leader strategies in complex Stackelberg games, demonstrating how it can be applied to economic scenarios capturing many aspects of indirect mechanisms, including communication, multi-round interaction, and uncertainty regarding participants' types. 

Stackelberg games have been widely studied in computer science, with applications to security~\citep{sinha2018stackelberg}, wildlife conservation~\citep{fang2016deploying}, and taxation policies~\citep{zhou2019would,wang1999commodity}, among others. Much of the research has focused on solving these games in simple settings, such as {those with a single follower}~\citep[e.g.]{ConitzerS06}, complete information settings where followers move simultaneously~\citep[e.g.]{basilico2017methods}, or settings where the leader’s strategies can be enumerated and optimized via bandit approaches~\citep[e.g.]{BaiJWX21}. A more recent research thread~\citep{zhang2023computing,zhang2023steering} focused on designing payment schemes that steer no-regret-learning agents to play desirable equilibria in extensive-form games. 

Our work differs by considering multi-round games with incomplete information, where the leader’s interventions involve not only monetary compensation but also soliciting bids and allocating items.\footnote{In one of our scenarios, the leader's task involves generating a game that allows for communication between a mechanism and agents, where the semantics of this communication arises from the strategic response of followers, and where the order with which agents participate in the market and the prices they see depends, through the rules of the mechanism, on this communication.} We approach this by treating these interventions as \textit{policies} and refining them using \textit{reinforcement learning} techniques. Furthermore, we don't restrict the followers to just adapting via no-regret algorithms in response to the leader's strategy. Our framework is broader, allowing for various follower behavior models, including those emulating human behavior through inverse reinforcement learning \citep{arora2021survey}. 

Formally, we model our settings as \textit{Stackelberg Partially Observable Markov Games}, where the leader commits to a strategy, and the followers respond to this commitment. The partial observability in these games arises from the followers' private types. To our knowledge, this work is the first to provide a framework that supports a proof of convergence to a Stackelberg policy in Partially Observable Markov Games (POMGs).\footnote{\citet{zhong2021can} support convergence {to an optimal leader strategy in POMGs}, but they restrict their followers to be myopic.} We achieve this result through the suitable construction of a single-agent leader learning problem integrating the behavior of the followers, in learning to adapt to the leader's strategy, into the leader's learning environment.

The learning method that we develop for solving for the Stackelberg leader's policy in POMGs works for any model of followers in which they learn to respond to the leader's policy by interacting with the leader's policy across trajectories of states sampled from the POMG. We refer to these follower algorithms as \textit{policy-interactive learning algorithms}. Our insight is that since the followers' policy-interactive {learning algorithms adapt to a leader's strategy by taking actions in game states}, we can incorporate these algorithms into a {carefully constructed POMDP representing the leader's learning problem}. In this POMDP, the policy's actions determine the followers' response {(indirectly through the interaction of the leader's policy with the followers' learning algorithms)} and in turn the reward to the leader. This idea allows us to establish a connection between solving POMDPs and solving  Stackelberg games: we prove that the optimal policy in our POMDP under a limited set of policy options is the optimal Stackelberg leader strategy for the given model of follower behavior. We handle learning under limited policy options via a novel use of actor-critic algorithms based on centralized training and decentralized execution \citep[see, e.g.,]{LoweWTHAM17}.

With this general result in place, we give algorithmic results for
a model of followers as {\em no-regret learners}, and demonstrate the robustness and flexibility of the resulting Stackelberg POMDP by evaluating it across an array of increasingly complex settings. 

\paragraph{Further related work.} The Stackelberg POMDP framework introduced in the present paper has been further explored in subsequent studies. In \citet{BreroLMP22}, Stackelberg POMDP is utilized for discovering rules for economic platforms that mitigate collusive behavior among pricing agents employing Q-learning algorithms. \citet{gerstgrasser2023oracles} have further generalized Stackelberg POMDP to accommodate domain-specific queries for faster followers' response via meta-learning. However, their work only studied settings with complete information, where followers lack private types. This limitation allowed them to achieve good results without using the centralized training with decentralized execution framework we discuss in this work, as detailed in our experiments section.

\section{Preliminaries} \label{sec:preliminaries}

We consider a multi-round \textit{Stackelberg game} with $n+1$ agents: a \textit{leader} ($\ell$) and $n$ \textit{followers} (indexed $1,\ldots,n$). The leader gains a strategic advantage by committing to an initial strategy, influencing the subsequent choices made by the followers. This early commitment allows the leader to anticipate the followers' reactions to their chosen strategy.

We formulate our Stackelberg game through the formalism of a \textit{partially observable Markov game} \citep{HansenBZ04} between the leader and the followers: 
%
\begin{definition}[Partially Observable Markov game]
A {\em partially observable Markov game (POMG)} $\mathcal G$ with $k$ agents (indexed $1,\ldots,k$) is defined by the following components:
\begin{itemize}
    \item A set of {\em states}, $S$, with each state $s \in S$ containing the necessary information on game history to predict the game's progression based on agents' actions. 
    \item A set of {\em actions}, $A_i$ for each agent $i\in [k]$. The {\em  action profile} is denoted by $a = (a_1, ..., a_k) \in A$, where $A = A_1 \times ... \times A_k$.
    \item A set of {\em observations}, $O_i$ for each agent $i\in [k]$. The agents'  observation profile is denoted  $o = (o_1, ..., o_k) \in O$, where $O = O_1 \times ... \times O_k$.  
    \item A {\em state transition function}, $\tran: S \times A \times S \to [0, 1]$, which defines the probability of transitioning from one state to another given the current state and  action profile: $\tran(s, a, s') = \Prob(s'|s, a)$. 
    \item An {\em observation generation rule}, $\mathcal{O}: S \times A \times O \to [0,1]$, which defines the probability distribution over observations given the current state and action profile: $\mathcal{O}(s, a, o) = \Prob(o|s, a)$.
    \item A {\em reward function}, $R_i: S \times A \to \mathbf{R}$, which maps  for each agent $i\in [k]$ the current state and  action profile  to the immediate reward received by agent $i$, $R_i(s, a)$.
    \item A finite time horizon, $T$.
\end{itemize}
\end{definition}

Each agent $i$ aims to maximize their expected total reward over $T$ steps. Going forward, we use a subscript $t$ with every POMG variable to refer to its value at the $t$-th step of the game. The history of agent $i$’s observations at step $t$ is denoted by $h_{i,t} = (o_{i,0},..,o_{i,t})$, while $h_i$ represents a generic history with unspecified time step and $H_i$ is the set of these histories. Agent $i$'s behavior in the game is characterized by their {\em policy}, $\pi_i: H_i \to A_i$, where $\pi_i(h_i)$ is the policy action when observing history $h_i$.\footnote{We only consider deterministic policies for simplicity, but our results extend to randomized policies.} We let $\Pi_i$ be the set of these policies. 

Given a policy profile $\pi_I$ for a subset $I\subset[k]$ of agents in $\mathcal{G}$, we can define a subgame in $\mathcal{G}$, denoted $\mathcal{G}_{\pi_I}$, for the remaining agents in $[k]\setminus I$:
\begin{definition}[Subgame of a POMG (Informal Definition, Formal in Appendix)]
\label{def:inducedGame}
    Given a POMG $\mathcal{G}$, a subset $I$ of the agents, and a policy profile $\pi_I$ for agents in $I$, we define the {\em subgame POMG}, $\mathcal{G}_{\pi_I}$, as the game induced among agents $[k]\setminus I$ by policy profile $\pi_I$. Actions, rewards, and observations remain the same as in $\mathcal G$, while states are augmented to include policies $\pi_I$ and observation histories $h_I$. 
    State transitions in $\mathcal G_{\pi_I}$ are induced by the state transition function in $\mathcal{G}$, with the actions of the agents in $I$ in $\mathcal G$ computed ``internally'' by the game $\mathcal{G}_{\pi_I}$ by applying policies $\pi_I$ on histories $h_I$.
\end{definition}
We can now formalize our Stackelberg game as follows:
\begin{definition}[Stackelberg Partially Observable Markov Game]
\label{def:BSGame}
    Let $\mathcal{G}$ be a partially observable Markov game (POMG) involving a leader $\ell$ and $n$ followers. We define the \textit{Stackelberg POMG} $\mathcal{S}_\mathcal{G}$ based on $\mathcal{G}$ as follows:
    \begin{itemize}
        \item \textbf{Leader's Commitment}: The leader selects a policy $\pi_\ell$ within $\mathcal{G}$.
        \item \textbf{Induced Subgame}: The followers play the subgame POMG $\mathcal{G}_{\pi_\ell}$, induced by the leader's committed policy. This results in a state-action trajectory $\tau = (s_0, a_0, \ldots, s_{T-1}, a_{T-1})$ within $\mathcal{G}$.
        \item \textbf{Leader's Reward}: The leader receives a reward $R_\ell(\tau) = \sum_{t=0}^{T-1} R_\ell(s_t, a_t)$ based on the trajectory $\tau$.
    \end{itemize}
\end{definition}

Given our focus on economic design, in the remainder of this paper we consider Bayesian settings that model standard mechanism design problems. In these settings, when the game begins each follower $i$ privately observes their type $\type_i$ (i.e., $o_{i,0} = \type_i$), which captures their preferences for the mechanism's outcomes. The leader, in contrast, begins with an empty initial observation (i.e., $o_{\ell,0} = 0$. We use $\type = (\type_1,...,\type_n)$ for the follower type profile and assume that type profiles are drawn from a (possibly correlated) distribution $\D$.

\section{The Leader's Problem}\label{sec:leaderProblem}
The leader in a Stackelberg game optimizes their reward by anticipating how the followers will respond to their strategy. In general-sum partially observable Markov games (POMGs), followers can exhibit a wide range of response behaviors. This includes both equilibrium strategies (of which there may be many) and non-equilibrium behaviors, such as the seemingly collusive ones isolated by \citet{calvano2020artificial} and \citet{abada2023artificial}. Our framework accommodates all the possible behaviors that can be learned by the followers within the standard joint environment, i.e., by repeatedly playing the POMG $\mathcal G$ against the leader. To formalize this process, we introduce the notion of \textit{policy-interactive response algorithms}, which we define as follows: 
\begin{definition}[Policy-Interactive Response Algorithm]\label{def:PIAlgorithm} Given a Stackelberg POMG, $\mathcal{S}_\mathcal{G}$ and a leader policy $\pi_\ell$ in $\mathcal{S}_\mathcal{G}$, we define a policy-interactive (PI) response algorithm, denoted as $\eqalg$ as the algorithm that determines a behavior function for the followers in the subgame $\mathcal G_{\pi_\ell}$ by repeatedly playing game $\mathcal G$ against $\pi_\ell$.
A PI response algorithm consists of the following elements:
\begin{itemize}
    \item A set of states, $S_b$, where each state $s_b= (\pi_{-\ell}, s_b^+) \in S_b$ is comprised of two parts: a policy profile $\pi_{-\ell}$ representing the followers' strategies in $\mathcal G$, and an additional information state $s_b^+$ utilized for state transitions. 
    \item A state transition function, $\tran_b(s_b, \mathcal{O}, s_b') = Pr(s_b'|s_b, \mathcal O)$, that defines the probability of transitioning from one state to another. The transition is based on the current state $s_b$ and the outcome $\mathcal{O}$ of game $\mathcal G$ played using the follower policies in $s_b$ and the leader's actions queried from $\pi_\ell$.\footnote{For ease of exposition, we only allow PI response algorithms to update their states at the end of gameplays. However, this aspect can be easily generalized to accommodate intra-game updates.}
    \item A finite time horizon $E$, corresponding to the total number of game plays run by the algorithm.
\end{itemize}
\end{definition}
As previously done for POMGs, we use subscript $e$ with every variable of our PI algorithm to refer to its value at the $e$-th game play. Consequently, the final state $s_{b,E}$ of our PI algorithm includes the followers' response policy profile $\pi^*_{-\ell}$, where $ \pi^*_{-\ell} = (\pi^*_1,\ldots,\pi^*_n)$. To express the leader's problem, we denote the followers' responses using a {\em behavior function} $\mathcal{B}(\cdot)$. This function takes the leader's strategy $\pi_\ell$ as input and returns followers' response strategies $\mathcal{B}(\pi_\ell) = \pi^*_{-\ell}$. 
We let $$(\tau_\ell, \tau^*_{-\ell}) = (s_0, (a_{\ell,0},a^*_{-\ell,0}), \ldots, s_{T-1}, (a_{\ell,T-1},a^*_{-\ell,T-1}))$$
be a {\em state-action trajectory} in $\mathcal G$ generated by $\pi_\ell$ and $\mathcal{B}(\pi_\ell)$ and $\Prob(\cdot| \pi_\ell, \mathcal{B})$ be the distribution on trajectories. We formulate the leader's problem: 
\begin{definition}
[Optimal leader strategy]
\label{def:leaderProblem}
    Given a Stackelberg POMG, $\mathcal{S}_\mathcal{G}$, and a behavior function $\mathcal{B}(\cdot)$ for the followers in $\mathcal{S}_\mathcal{G}$, 
    the {\em  leader's problem} is defined as finding a policy that maximizes their expected reward under $\mathcal{B}(\cdot)$:
    \begin{equation*}
        \pi^*_\ell \in \underset{\pi_\ell\in \Pi_\ell}{\arg \max } \underset{(\tau_\ell, \tau^*_{-\ell}) \sim 
        \Prob(\cdot| \pi_\ell, \mathcal{B})
        }{E}\left[\sum_{t=0}^{T-1} R_\ell\left(s_t, (a_{\ell, t}, a^*_{-\ell, t})\right)\right].
    \end{equation*}
\end{definition}

In the remainder of this paper we assume that our behavior function $\mathcal{B}$ returns \textit{Bayesian Coarse Correlated Equilibria (BCCE)}, which we define in the Appendix. In the Appendix, we also prove that this behavior function can be derived when followers use \textit{no-regret learning algorithms}, extending a result proven by \citet{hartline2015no} to the case of correlated types. We note that BCCEs are equivalent to Coarse-Correlated Equilibria (CCE) when the followers only have one type, i.e., the setting is effectively non-Bayesian. We then instantiate the policy-interactive response algorithm to the \textit{multiplicative-weights algorithm}, which is a specific no-regret learning algorithm. In our Appendix, we show how to formulate the multiplicative-weights algorithm as a PI response algorithm.

We note that our framework can accommodate many behavior functions, as PI response algorithms are quite general. This includes conventional MARL algorithms like multi-agent Q-learning, as well as centralized algorithms such as the Multi-Agent Deep Deterministic Policy Gradient algorithm~\citep{LoweWTHAM17}, and even human-like behaviors derived using inverse RL~\citep{arora2021survey}.

\section{Learning an Optimal Leader Policy} 
\label{sec:learningPolicies}

In this section, we formulate the Stackelberg leader problem as a single-agent \textit{partially observable Markov decision process (POMDP)}, which we refer to as the \textit{Stackelberg POMDP}. This formulation is the main conceptual contribution of the paper.

Essentially, we integrate our policy interactive response algorithms, which provide the response model of our followers, into the environment of a carefully constructed POMDP. This integration enables us to expose the leader to the impact of their policy on the behavior of followers. We will demonstrate that this single-agent POMDP can be used to derive a solution to the leader problem of Definition~\ref{def:leaderProblem}, considering the behavior function corresponding to the final state of the policy-interactive response algorithm employed.

\begin{definition}[Stackelberg POMDP] Given a Stackelberg POMG, $\mathcal{S}_\mathcal{G}$ and a PI response algorithm $\eqalg$ determining followers' policies $\pi^*_{-\ell}$, the {\em Stackelberg POMDP} is constructed as a finite-horizon, episodic POMDP {that models the leader's  problem}, where an episode of the Stackelberg POMDP is divided into two phases:
\begin{enumerate}
    \item A {\bf follower response phase}, consisting of $E\cdot T$ steps, during which
    game $\mathcal G_{\pi_{-\ell,e}}$ is played $E$ times against the strategies $\pi_{-\ell,e}$ contained in the states of the $\eqalg$ algorithm. 
For the Stackelberg POMDP, 

$\bullet$ The state at step $\gamma\le E\cdot T$, denoted 
$s_{\mathcal S,\gamma} = (s_{b,e}, s_t)$,  is defined 
as a tuple comprising the internal algorithm state $s_{b,e}$ of $\eqalg$, where $e = \lfloor\gamma / T \rfloor$, and the game state $s_t$ at the current game step $t = \gamma \mod T$. The initial 
state corresponds to initializing the algorithm state of $\eqalg$, and sampling the game state according to the distribution on initial game states.

$\bullet$ The observation at step $\gamma \leq E\cdot T$, denoted $o_{\mathcal S,\gamma}$, corresponding to the leader's observations in game $\mathcal G$ at the current game step $t = \gamma \mod T$ and during the $e = \lfloor\gamma / T \rfloor$
iteration of gameplay.

$\bullet$ For any step $\gamma$ that is not a last step of the game $\mathcal G$, the state transition  is induced by rules of game $\mathcal G$, and depends on the action of the leader and the followers in $\mathcal G$, where 
 the actions of the followers are determined by $\eqalg$. {The state transition for a step $\gamma$ that is the last step of the game $\mathcal G$ corresponds to updating the algorithm state of $\eqalg$ and sampling the game state according to the distribution on initial game states}.

$\bullet$ {The action at step $\gamma\leq E \cdot T$, denoted $a_{\mathcal S,\gamma}$, models the action of the leader at the game step $t = \gamma \mod T$ of the $e = \lfloor\gamma / T \rfloor$ iteration of gameplay, and is determined by the POMDP policy.}

$\bullet$ There is zero reward at every step of the follower response phase (this models zero reward to the leader; the PI response algorithm $\eqalg$ of the followers can make use of the game reward of the followers).        
        %
        \item A {\bf reward phase}, consisting of $T$ steps, and corresponding to a single play of $\mathcal G_{\pi^*_{-\ell}}$, where the followers adopt policies $\pi^*_{-\ell}$ given by the final algorithm state, $s_{b,E}$, of $\eqalg$.        {The state, observation, and action are defined as in the follower response phase. For the state transition, the follower actions now depend on the final algorithm state of $\eqalg$. The reward corresponds to the reward to the leader in game $\mathcal G$ at the current game step $t=\gamma \ \mathrm{mod}\  T$.}
    \end{enumerate} 
\end{definition}
Figure~\ref{fig:stackpomdp} provides a graphical representation of our Stackelberg POMDP. For concreteness, we detail the application of the Stackelberg POMDP to a specific setting in our experiments in the Appendix.

Before presenting our main result, it is important to highlight that the set $\Pi_\ell$ of leader's policies in $\mathcal G$ is a subset of the set of history-dependent policies in our Stackelberg POMDP. We can now state our key theorem that establishes the connection between finding an optimal policy in our POMDP and solving the leader's Stackelberg problem. 
\begin{restatable}{theorem}{optimalpolicy} \label{prop:optimal-policy}
    Let $\pi_{\mathcal S}^*$ be the optimal policy in our Stackelberg POMDP among the set of leader's policies $\Pi_\ell$. Then, $\pi_{\mathcal S}^*$ also solves the leader's Stackelberg problem of Definition~\ref{def:leaderProblem}. 
\end{restatable}
Note that the limitation of our policy space to $\Pi_\ell$ prevents us from adopting conventional techniques typically used for solving POMDPs, such us formulating policies over belief states or (sufficient statistics of) histories of observations. {This restriction is crucial, as conventional techniques would derive a solution that is inapplicable to the original game $\mathcal G$.} Given this limitation, our approach to tackling POMDPs involves \textit{a novel utilization centralized training and decentralized execution training algorithms} \citep[see, e.g.]{LoweWTHAM17}.
These algorithms revolve around \textit{actor-critic algorithms} for reinforcement learning. The \textit{critic network} receives the entire state information of the POMDP as input, while the \textit{actor network}, which corresponds to the policy, only interacts with the observable state information. As explained further in the Appendix, this methodology ensures the suitability of our approach to compute the optimal policy in our POMDP. It achieves this by exposing the critic network to a Markovian environment while constraining the learned policy (i.e., the actor network) to use solely the information available during deployment.
\begin{figure*}[t]
    \centering
    \includegraphics[width=0.7\textwidth]{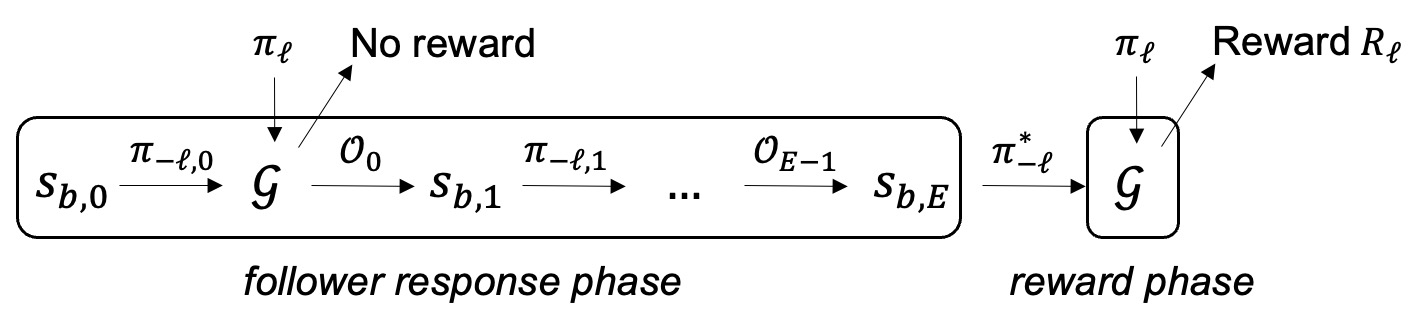}
    \caption{A Stackelberg POMDP episode. The \textit{follower response phase} starts with an initial PI response algorithm state $s_{b,0}$ and corresponding followers' policies $\pi_{-\ell,0}$. These policies interact with the leader policy $\pi_{\ell}$ in game $\mathcal G$, leading to a new PI response algorithm state $s_{b,1}$ and corresponding followers' policies $\pi_{-\ell,1}$. This process continues until completing $E$ plays of the game with final state $s_{b,E}$, which includes the behavior policies $\pi^*_{-\ell}$. Subsequently, in the reward phase, the game is played one more time against the behavior strategies $\pi^*_{-\ell}$, generating reward $R_\ell$.}
    \label{fig:stackpomdp}
\end{figure*}
\section{Applications of Stackelberg POMDP}
\begin{table*}[b]
    \centering
    \setlength{\extrarowheight}{2pt}
    \begin{tabular}{|c|c|c|c|}
     	\hline
     	{{\bf Setting}} & {{\bf Followers}} & {\bf Bayesian} & {\bf Followers' Behavior} \\
     	\hline
     	Normal form games & Single & No & Best response \\ 
     	\hline
     	Matrix design games & Multiple & No & CCE\\ 
     	\hline
     	Simple allocation mechanism & Single & Yes & Best response\\ 
     	\hline
     	MSPM & Multiple & Yes & BCCE \\ 
     	\hline
    \end{tabular}
    \vspace{1em}
    \captionof{table}{A taxonomy of our experimental settings. We defer non-Bayesian settings to the Appendix. \label{table:settingTaxonomy}}
\end{table*}
In this section, we demonstrate the effectiveness of the Stackelberg POMDP learning framework through a series of experiments in economic design settings of increasing complexity (Table~\ref{table:settingTaxonomy}). The codebase for these experiments is accessible in our publicly available repository \citep{GlcBrero_StackelbergPOMDP}.

We use multiplicative weights for the followers' model, ensuring convergence to the set of coarse correlated equilibria within the followers' game. We adopt maximum social welfare as the design goal. 


We train the leader’s policy by implementing a centralized training with decentralized execution algorithm based on the Proximal Policy Optimization (PPO) algorithm~\citep{schulman2017proximal}. Consistent with the literature, we call this variant Multi-Agent PPO (MAPPO) \citep{LoweWTHAM17}. To implement MAPPO, we start from the PPO algorithm implemented in Stable Baselines3~\citep[][MIT License]{stable-baselines3} and modify the network architecture so that the critic network has access to the followers’ internal information. PPO’s default parameters are not modified, except that we do not discount rewards and we decrease the learning rate by a factor 100 in our MSPM experiments to avoid exploding gradients. 
We also limit our learning to deterministic leader policies to reduce the variance in followers' responses. This is achieved by maintaining an {\em observation-action map} throughout each episode. When a new observation is encountered during the episode, the policy chooses an action following the default training behavior and stores this new observation-action pair in the map. Otherwise, the policy uses the action stored in the map.\footnote{In general, optimal leader policies can be non-deterministic \citep[e.g.,][]{ConitzerS06}. In one of our experiments presented in the Appendix, we show that we can also successfully learn optimal non-deterministic policies by modifying the leader’s action space accordingly.}
%
%

While training the {leader's} policy, we log the 
rewards arising from the reward phase of a {separate} Stackelberg POMDP episode, and run this every 10k training steps. These evaluation episodes use the current leader policy, and operate it executing the action with the highest weight given each observation.
In the non-Bayesian settings, we run the Stackelberg POMDP for 100 {follower response} sub-episodes. In the  Bayesian settings, we run the Stackelberg POMDP for 300 {follower response} sub-episodes and sample new types for followers at the beginning of each new game.
In the Bayesian settings, we further execute each game several times (up to 100 times for the largest type set) during the reward phase of the Stackelberg POMDP, each time sampling new types. Executing the game multiple times in the reward phase significantly reduces the computation time by avoiding the need to re-run the response phase for different types. With this optimization, each training run completes in under 8 hours using 5 cores on an Intel Xeon Platinum 8268 CPU \@2.90GHz.

\subsection{Simple Allocation Mechanisms}
\label{sec:simpleAllocationMechanisms}

\begin{figure*}
\centering
    \begin{subfigure}{.45\textwidth}
        \centering
        \includegraphics[width=\linewidth]{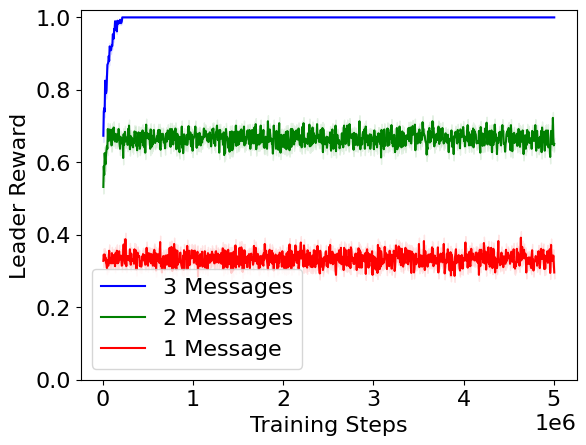}
        \caption{Performance across message spaces.}\label{fig:messageSpace}
    \end{subfigure}%
    \begin{subfigure}{.45\textwidth}
        \centering
        \includegraphics[width=\linewidth]{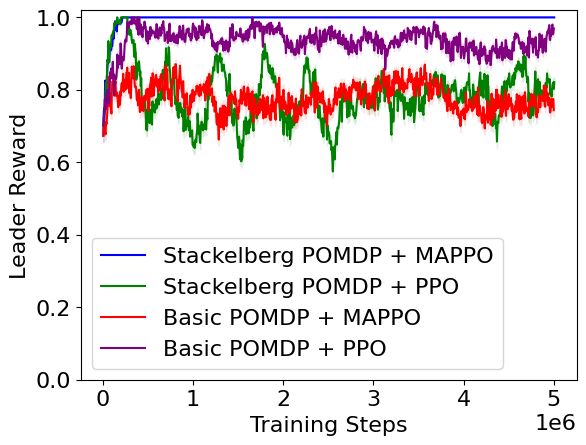}
        \caption{Ablation studies, 3 messages.} \label{fig:simpleAllocationAblation}
    \end{subfigure}%
    \caption{Training curves for Simple Allocation Mechanism, averaged over 25 runs. Standard errors are shown in lighter shades. Optimal welfare corresponds to a leader reward of 1.} 
\end{figure*}

Before moving to our main experiments, we conduct ablation studies in a simple Bayesian setting. Here, a mechanism allocates one of three items to an agent who is only interested in a single item. Initially, the agent's preferred item (their type) is chosen randomly. The agent then sends a message, $\mu\in\{1,..,m\}$ to the mechanism, which uses this message to allocate an item (i.e., the leader policy maps the message to the allocated item).
If the agent receives their desired item, both the mechanism and the agent gain a utility of 1; otherwise, 0. With a single follower, multiplicative-weights converges to a follower best response.

We first test the performance of MAPPO when varying the number of distinct messages that can be sent, $m$ in $\{1,2,3\}$. When $m=1$, the agent cannot communicate, and the mechanism randomly chooses an item, succeeding one third of the time. When $m=2$, the agent can signal their preferred item one third of the time, increasing expected reward to $0.\bar 6$. With $m=3$, the agent signals their type exactly, and the mechanism always achieves maximum reward. As shown in Figure~\ref{fig:messageSpace}, MAPPO quickly learns these optimal policies.

We then conduct ablation studies for $m=3$ to illustrate the impact of the key learning components introduced in this paper, i.e.:
\begin{itemize}
	\item Enhanced leader learning environment incorporating the follower response phase (Stackelberg POMDP): This modification allows the leader to reinforce not only the actions on the ``equilibrium" gameplay trajectory but also those taken to induce desirable follower responses. We contrast Stackelberg POMDP with the \textit{Basic POMDP}, obtained by including the followers' response strategies in the environment of the partially observable Markov game.
	\item Centralized training with decentralized execution framework (MAPPO): This training algorithm exposes the critic network in PPO to a stationary environment, aiding its learning process.
\end{itemize}
Figure~\ref{fig:simpleAllocationAblation} demonstrates the criticality of both components for achieving optimal performance.When either component is omitted, performance deteriorates, with the desired item being allocated only 80\% of the time on average. Specifically:
\begin{itemize}
	\item Basic POMDP with MAPPO: We hypothesize that the suboptimal performance is primarily due to the critic's inability to leverage the followers' internal information, owing to the absence of the response phase. To support this hypothesis, we also implemented plain PPO (where the critic does not access internal follower information) in the basic POMDP, which resulted in improved performance.
	\item Stackelberg POMDP with PPO: We conjecture that the suboptimal performance is mainly attributed to the critic losing access to the followers' private types. To further validate this hypothesis, we conducted the same baseline experiment in a non-Bayesian setting (see Appendix). In this setting, Stackelberg POMDP with PPO also achieves optimal, stable performance.
\end{itemize}

\subsection{Sequential Price Mechanisms with Messages}
\label{sec:expSPMs}

In our final set of experiments, we test our framework in designing \textit{sequential price mechanisms} \citep{brero2021reinforcement,sandholm2006sequences}, 
extended to include an initial messaging round similar to the simple allocation mechanism. We refer to this environment as {\em message sequential price mechanisms} (MSPMs). 

In this scenario, multiple items are allocated among several agents, each with private valuations (types) for different item bundles. In the initial round, each agent observes their type and communicates a message, $\mu_i \in \{1,\dots,m\}$, to the mechanism. Subsequently, the mechanism visits each agent
in turn, possibly in an adaptive order: the mechanism selects an unvisited agent $i$, and announces a price $p_j\geq 0$ for each  item $j$ {that has not yet been purchased}. Agent $i$ then selects the bundle of items that maximizes their utility given these prices. The objective of the mechanism is to maximize the expected social welfare. 

We test settings that generalize those introduced by \citet{AgrawalSZ20}. These settings involve two agents and a single item, with each agent having $\bar \theta>1$ possible types sampled uniformly at random from the set $\{0,1/(\bar \theta-1),\ldots, 1 \}$. When the number of messages $m$ is equal to the number of types $\bar \theta$, welfare can be maximized through the well-established Vickrey auction \citep[e.g.,]{klemperer2018auctions}. This is achieved by interpreting messages as reported types and visiting the agent reporting the highest type first, offering them the item at a price equal to the second-highest reported type. We investigate whether maximum welfare can still be achieved with fewer messages than types.

Unlike our previous examples that focused on learning quality, we now leverage our learning framework to search for indirect mechanisms, following the established practice for offline mechanism design \citep[e.g.,]{dutting2024optimal}. Each training run includes an equilibrium verifier, checking if the PI response algorithm's final strategy profile is an equilibrium for the current leader policy.\footnote{Randomized strategies introduce continuous probabilities, hindering exact equilibria. Our verifier then focuses on deterministic equilibria (maximizing final weights) for efficiency, though this might overlook non-deterministic ones that can be found with more intensive experiments.} The algorithm reports the welfare-optimal policy among those inducing agent equilibrium (here, multiplicative-weights learns an approximate BCCE in the leader-induced game). To mitigate local optima during training, we design each run to consist of 10 parallel learning processes. These processes operate concurrently for 10 million steps, each initiated with distinct weights.

Table~\ref{table:spmResults} summarizes our results for different types and numbers of messages. Notably, we consistently find optimal mechanisms using fewer messages than the number of types, revealing novel economic designs. {These designs effectively utilize the messages sent by the bidders to maximize welfare given the limited communication.} In the most complex case, we find mechanisms that always maximize welfare using only 2 messages, even with 5 possible types per agent. These optimal mechanisms with corresponding equilibrium strategies are detailed in the Appendix. Furthermore, even when no optimal mechanism is found, each run reveals approximately welfare-maximizing MSPMs with average welfare losses always below 1\%.
\begin{table}[t]
\centering
\begin{tabular}{|c|c|c|c|}
\hline
\textbf{Types} & \textbf{Messages} & \textbf{Avg. Welfare Loss (SE)} & \textbf{Optimal Found (\%)} \\
\hline
3 & 2 & 0 & 100\% \\
\hline
4 & 2 & 0 & 100\% \\
\hline
\multirow{2}{*}{5} & 2 & 0.009 (0.003) & 4\% \\
\cline{2-4}
 & 3 & 0.006 (0.005) & 40\% \\
\hline
\end{tabular} 
\vspace{1em}
\caption{Performance of learned sequential price mechanisms over 25 runs of up to 100 million steps each. For a description of optimal mechanisms and equilibrium strategies in each scenario, please refer to the Appendix.}\label{table:spmResults}
\end{table}

\section{Conclusion}

This paper has introduced a reinforcement learning approach to compute optimal leader strategies in Bayesian Stackelberg games over multiple rounds. Specifically, we can solve for Stackelberg leader strategies when follower behavior types can be instantiated as policy-interactive response algorithms.
By integrating these follower algorithms into the leader’s learning environment, we attain the single-agent Stackelberg POMDP, with the useful property that the optimal policy in this single-agent POMDP under a limited set of policy options yields the optimal leader strategy in the original Stackelberg game. We showed how the centralized training with decentralized execution framework can be used to solve POMDPs within this restricted policy space, confirming the necessity of this approach via ablation studies.

%

As an application of this new framework, we have  demonstrated that we can learn optimal policies in the Stackelberg POMDP in environments of increasing complexity, including ones featuring limited communication and turn-taking play by agents. To the best of our knowledge, this is the first use of end-to-end, automated mechanism design for indirect mechanism with strategic agents.

Scaling Stackelberg POMDPs to larger settings is challenged by long response processes, leading to sparse leader rewards. A straightforward and promising approach to address these sparse rewards involves incorporating response steps probabilistically. Alternatively, accelerating the followers’ response process through domain-specific queries could be considered, akin to the meta-learning approach explored by \citet{gerstgrasser2023oracles}. 



Looking forward, we  believe that the Stackelberg POMDP can be used to develop {insights and in turn new theoretical results} in regard to  the design of market platforms. Indeed, studying the properties of learning algorithms  {as models of agent  behavior} is of growing interest to economists, especially considering recent work investigating which learning algorithms can lead to collusive behaviors, such as the studies by \citet{cartea2022learning} and \citet{brown2023competition}.
For instance, the Stackelberg POMDP framework could
 assist in classifying followers' response algorithms based on the type of intervention needed to achieve desired outcomes. 

\paragraph{Broader Impact and Limitations.} 
{When using simulators to develop AI-based policies, for example to mediate market transactions on platforms, it is important 
that the data used to build the simulator 
is representative of the target population.
Moreover, the behaviors (and preferences) assumed in  training  might not reflect the  behaviors when deployed, suggesting the importance of continual monitoring to check for assumptions.}
Additionally, when designing rules for multi-agent systems, it is important to adopt an appropriate design objective. In this paper, we focus on  economic welfare, but it should ultimately be up to the relevant stakeholders to determine the best objective.

\section*{Acknowledgements}
This research is funded in part by Defense Advanced Research Projects Agency under Cooperative Agreement HR00111920029. The content of the information does not necessarily reflect the position or the policy of the Government, and no official endorsement should be inferred. This is approved for public release; distribution is unlimited. The work of G. Brero was also supported by the SNSF (Swiss National Science Foundation) under Fellowship P2ZHP1\_191253.

\bibliography{stackpomdp.bib}

\begin{thebibliography}{33}
\providecommand{\natexlab}[1]{#1}
\providecommand{\url}[1]{\texttt{#1}}
\expandafter\ifx\csname urlstyle\endcsname\relax
  \providecommand{\doi}[1]{doi: #1}\else
  \providecommand{\doi}{doi: \begingroup \urlstyle{rm}\Url}\fi

\bibitem[Abada and Lambin(2023)]{abada2023artificial}
Ibrahim Abada and Xavier Lambin.
\newblock Artificial intelligence: Can seemingly collusive outcomes be avoided?
\newblock \emph{Management Science}, 69\penalty0 (9):\penalty0 5042--5065,
  2023.

\bibitem[Agrawal et~al.(2020)Agrawal, Sethuraman, and Zhang]{AgrawalSZ20}
Shipra Agrawal, Jay Sethuraman, and Xingyu Zhang.
\newblock On optimal ordering in the optimal stopping problem.
\newblock In \emph{Proc. EC '20: The 21st ACM Conference on Economics and
  Computation}, pages 187--188, 2020.

\bibitem[Arora and Doshi(2021)]{arora2021survey}
Saurabh Arora and Prashant Doshi.
\newblock A survey of inverse reinforcement learning: Challenges, methods and
  progress.
\newblock \emph{Artificial Intelligence}, 297:\penalty0 103500, 2021.

\bibitem[Bai et~al.(2021)Bai, Jin, Wang, and Xiong]{BaiJWX21}
Yu~Bai, Chi Jin, Huan Wang, and Caiming Xiong.
\newblock Sample-efficient learning of {Stackelberg} equilibria in general-sum
  games.
\newblock In \emph{Advances in Neural Information Processing Systems 34}, pages
  25799--25811, 2021.

\bibitem[Basilico et~al.(2017)Basilico, Coniglio, and
  Gatti]{basilico2017methods}
Nicola Basilico, Stefano Coniglio, and Nicola Gatti.
\newblock Methods for finding leader--follower equilibria with multiple
  followers.
\newblock \emph{arXiv preprint arXiv:1707.02174}, 2017.

\bibitem[Brero et~al.(2021{\natexlab{a}})Brero, Chakrabarti, Eden,
  Gerstgrasser, Li, and Parkes]{brero2021learning}
Gianluca Brero, Darshan Chakrabarti, Alon Eden, Matthias Gerstgrasser, Vincent
  Li, and David~C Parkes.
\newblock Learning {Stackelberg} equilibria in sequential price mechanisms.
\newblock In \emph{Proc. ICML Workshop for Reinforcement Learning Theory},
  2021{\natexlab{a}}.

\bibitem[Brero et~al.(2021{\natexlab{b}})Brero, Eden, Gerstgrasser, Parkes, and
  Rheingans{-}Yoo]{brero2021reinforcement}
Gianluca Brero, Alon Eden, Matthias Gerstgrasser, David~C. Parkes, and Duncan
  Rheingans{-}Yoo.
\newblock Reinforcement learning of sequential price mechanisms.
\newblock In \emph{Thirty-Fifth {AAAI} Conference on Artificial Intelligence,
  {AAAI}}, pages 5219--5227, 2021{\natexlab{b}}.

\bibitem[Brero et~al.(2022)Brero, Mibuari, Lepore, and Parkes]{BreroLMP22}
Gianluca Brero, Eric Mibuari, Nicolas Lepore, and David~C. Parkes.
\newblock Learning to mitigate {AI} collusion on economic platforms.
\newblock \emph{Advances in Neural Information Processing Systems}, 2022.

\bibitem[Brero et~al.(2024)Brero, Eden, Chakrabarti, Gerstgrasser, Greenwald,
  Li, and Parkes]{GlcBrero_StackelbergPOMDP}
Gianluca Brero, Alon Eden, Darshan Chakrabarti, Matthias Gerstgrasser, Amy
  Greenwald, Vincent Li, and David~C. Parkes.
\newblock {StackelbergPOMDP}.
\newblock \url{https://github.com/GlcBrero/StackelbergPOMDP}, 2024.

\bibitem[Brown and MacKay(2023)]{brown2023competition}
Zach~Y Brown and Alexander MacKay.
\newblock Competition in pricing algorithms.
\newblock \emph{American Economic Journal: Microeconomics}, 15\penalty0
  (2):\penalty0 109--156, 2023.

\bibitem[Calvano et~al.(2020)Calvano, Calzolari, Denicolo, and
  Pastorello]{calvano2020artificial}
Emilio Calvano, Giacomo Calzolari, Vincenzo Denicolo, and Sergio Pastorello.
\newblock Artificial intelligence, algorithmic pricing, and collusion.
\newblock \emph{American Economic Review}, 110\penalty0 (10):\penalty0
  3267--3297, 2020.

\bibitem[Cartea et~al.(2022)Cartea, Chang, Penalva, and
  Waldon]{cartea2022learning}
{\'A}lvaro Cartea, Patrick Chang, Jos{\'e} Penalva, and Harrison Waldon.
\newblock Learning to collude: A partial folk theorem for smooth fictitious
  play.
\newblock \emph{Available at SSRN}, 2022.

\bibitem[Conitzer and Sandholm(2006)]{ConitzerS06}
Vincent Conitzer and Tuomas Sandholm.
\newblock Computing the optimal strategy to commit to.
\newblock In \emph{Proceedings 7th {ACM} Conference on Electronic Commerce
  (EC-2006)}, pages 82--90, 2006.

\bibitem[D{\"u}tting et~al.(2024)D{\"u}tting, Feng, Narasimhan, Parkes, and
  Ravindranath]{dutting2024optimal}
Paul D{\"u}tting, Zhe Feng, Harikrishna Narasimhan, David~C Parkes, and
  Sai~Srivatsa Ravindranath.
\newblock Optimal auctions through deep learning: Advances in differentiable
  economics.
\newblock \emph{Journal of the ACM}, 71\penalty0 (1):\penalty0 1--53, 2024.

\bibitem[Fang et~al.(2016)Fang, Nguyen, Pickles, Lam, Clements, An, Singh,
  Tambe, Lemieux, et~al.]{fang2016deploying}
Fei Fang, Thanh~Hong Nguyen, Rob Pickles, Wai~Y Lam, Gopalasamy~R Clements,
  Bo~An, Amandeep Singh, Milind Tambe, Andrew Lemieux, et~al.
\newblock Deploying {PAWS: Field} optimization of the protection assistant for
  wildlife security.
\newblock In \emph{AAAI}, volume~16, pages 3966--3973, 2016.

\bibitem[Gerstgrasser and Parkes(2023)]{gerstgrasser2023oracles}
Matthias Gerstgrasser and David~C Parkes.
\newblock Oracles \& followers: Stackelberg equilibria in deep multi-agent
  reinforcement learning.
\newblock In \emph{International Conference on Machine Learning}, pages
  11213--11236. PMLR, 2023.

\bibitem[Hansen et~al.(2004)Hansen, Bernstein, and Zilberstein]{HansenBZ04}
Eric~A. Hansen, Daniel~S. Bernstein, and Shlomo Zilberstein.
\newblock Dynamic programming for partially observable stochastic games.
\newblock In \emph{Proceedings of the Nineteenth National Conference on
  Artificial Intelligence}, pages 709--715. {AAAI} Press / The {MIT} Press,
  2004.

\bibitem[Hartline et~al.(2015)Hartline, Syrgkanis, and Tardos]{hartline2015no}
Jason Hartline, Vasilis Syrgkanis, and Eva Tardos.
\newblock {No-regret learning in Bayesian games}.
\newblock \emph{Advances in Neural Information Processing Systems}, 28, 2015.

\bibitem[Klemperer(2018)]{klemperer2018auctions}
Paul Klemperer.
\newblock Auctions: theory and practice.
\newblock 2018.

\bibitem[Laffont and Tirole(1993)]{laffont1993theory}
Jean-Jacques Laffont and Jean Tirole.
\newblock \emph{A theory of incentives in procurement and regulation}.
\newblock MIT press, 1993.

\bibitem[Lowe et~al.(2017)Lowe, Wu, Tamar, Harb, Abbeel, and
  Mordatch]{LoweWTHAM17}
Ryan Lowe, Yi~Wu, Aviv Tamar, Jean Harb, Pieter Abbeel, and Igor Mordatch.
\newblock Multi-agent actor-critic for mixed cooperative-competitive
  environments.
\newblock In \emph{Advances in Neural Information Processing Systems 30}, pages
  6379--6390, 2017.

\bibitem[Monderer and Tennenholtz(2003)]{monderer2003k}
Dov Monderer and Moshe Tennenholtz.
\newblock k-implementation.
\newblock In \emph{Proceedings of the 4th ACM conference on Electronic
  Commerce}, pages 19--28, 2003.

\bibitem[Raffin et~al.(2021)Raffin, Hill, Gleave, Kanervisto, Ernestus, and
  Dormann]{stable-baselines3}
Antonin Raffin, Ashley Hill, Adam Gleave, Anssi Kanervisto, Maximilian
  Ernestus, and Noah Dormann.
\newblock Stable-baselines3: Reliable reinforcement learning implementations.
\newblock \emph{Journal of Machine Learning Research}, 22\penalty0
  (268):\penalty0 1--8, 2021.

\bibitem[Sandholm and Gilpin(2006)]{sandholm2006sequences}
Tuomas Sandholm and Andrew Gilpin.
\newblock Sequences of take-it-or-leave-it offers: Near-optimal auctions
  without full valuation revelation.
\newblock In \emph{Proceedings of the fifth international joint conference on
  Autonomous agents and multiagent systems}, pages 1127--1134, 2006.

\bibitem[Schulman et~al.(2017)Schulman, Wolski, Dhariwal, Radford, and
  Klimov]{schulman2017proximal}
John Schulman, Filip Wolski, Prafulla Dhariwal, Alec Radford, and Oleg Klimov.
\newblock Proximal policy optimization algorithms.
\newblock \emph{arXiv preprint arXiv:1707.06347}, 2017.

\bibitem[Sinha et~al.(2018)Sinha, Fang, An, Kiekintveld, and
  Tambe]{sinha2018stackelberg}
Arunesh Sinha, Fei Fang, Bo~An, Christopher Kiekintveld, and Milind Tambe.
\newblock Stackelberg security games: {Looking} beyond a decade of success.
\newblock IJCAI, 2018.

\bibitem[Sutton and Barto(2018)]{sutton2018reinforcement}
Richard~S. Sutton and Andrew~G. Barto.
\newblock \emph{Reinforcement Learning: An Introduction}.
\newblock The MIT Press, 2018.
\newblock ISBN 978-0262039246.

\bibitem[Wang(1999)]{wang1999commodity}
You-Qiang Wang.
\newblock Commodity taxes under fiscal competition: {Stackelberg} equilibrium
  and optimality.
\newblock \emph{American Economic Review}, 89\penalty0 (4):\penalty0 974--981,
  1999.

\bibitem[Zhang et~al.(2023{\natexlab{a}})Zhang, Farina, Anagnostides,
  Cacciamani, McAleer, Haupt, Celli, Gatti, Conitzer, and
  Sandholm]{zhang2023computing}
Brian~Hu Zhang, Gabriele Farina, Ioannis Anagnostides, Federico Cacciamani,
  Stephen~Marcus McAleer, Andreas~Alexander Haupt, Andrea Celli, Nicola Gatti,
  Vincent Conitzer, and Tuomas Sandholm.
\newblock Computing optimal equilibria and mechanisms via learning in zero-sum
  extensive-form games.
\newblock \emph{arXiv preprint arXiv:2306.05216}, 2023{\natexlab{a}}.

\bibitem[Zhang et~al.(2023{\natexlab{b}})Zhang, Farina, Anagnostides,
  Cacciamani, McAleer, Haupt, Celli, Gatti, Conitzer, and
  Sandholm]{zhang2023steering}
Brian~Hu Zhang, Gabriele Farina, Ioannis Anagnostides, Federico Cacciamani,
  Stephen~Marcus McAleer, Andreas~Alexander Haupt, Andrea Celli, Nicola Gatti,
  Vincent Conitzer, and Tuomas Sandholm.
\newblock Steering no-regret learners to optimal equilibria.
\newblock \emph{arXiv preprint arXiv:2306.05221}, 2023{\natexlab{b}}.

\bibitem[Zhang et~al.(2019)Zhang, Chen, Huang, Li, Yang, Zhang, and
  Wang]{zhang19bi-level}
Haifeng Zhang, Weizhe Chen, Zeren Huang, Minne Li, Yaodong Yang, Weinan Zhang,
  and Jun Wang.
\newblock Bi-level actor-critic for multi-agent coordination, 2019.

\bibitem[Zhong et~al.(2021)Zhong, Yang, Wang, and Jordan]{zhong2021can}
Han Zhong, Zhuoran Yang, Zhaoran Wang, and Michael~I Jordan.
\newblock Can reinforcement learning find {Stackelberg-Nash} equilibria in
  general-sum markov games with myopic followers?
\newblock \emph{arXiv preprint arXiv:2112.13521}, 2021.

\bibitem[Zhou et~al.(2019)Zhou, An, Zha, Wu, and Wang]{zhou2019would}
Dequn Zhou, Yunfei An, Donglan Zha, Fei Wu, and Qunwei Wang.
\newblock Would an increasing block carbon tax be better? {A} comparative study
  within the {Stackelberg} game framework.
\newblock \emph{Journal of environmental management}, 235:\penalty0 328--341,
  2019.

\end{thebibliography}
\bibliographystyle{plainnat}

\clearpage
\appendix

\section{Formal Definition of Subgame of a POMG} 

\begin{definition}[Subgame of a POMG]
    Given a POMG $\mathcal{G}$, a subset $I$ of the agents in $\mathcal{G}$, and a policy profile $\pi_I$ for agents in $I$, we define the {\em subgame POMG}, $\mathcal{G}_{\pi_I}$, as the POMG among agents in $[k]\setminus I$ where:
    \begin{itemize}
        \item Each state $s_{\pi_I}$ is formed by enhancing a state $s$ of $\mathcal{G}$ with policies $\pi_I$ and observation histories $h_I$: $s_{\pi_I} = (s,\pi_I, h_I)$. 
        \item The action of each agent $i\in [k]\setminus I$, $a_{\pi_I,i}$, remains unchanged from the original game: $a_{\pi_I,i} = a_{i}$ for all $i \in [k]\setminus I$.
        \item The observation of each agent $i\in [k]\setminus I$, $o_{\pi_I,i}$, is also the same as in the original game: $o_{\pi_I,i} = o_{i}$ for all $i \in [k]\setminus I$.
        \item The state transition function $\tran_{\pi_I}(s_{\pi_I}, a_{\pi_I}, s_{\pi_I}')$ corresponds to the state transition function in $\mathcal{G}$, with the actions of agents in $I$ computed using the observation histories and policies in the state. Formally, given $s_{\pi_I} = (s,\pi_I, h_I)$, $a_{\pi_I}$, and $s_{\pi_I}' = (s',\pi_I, h_I')$, we let $a = (a_{\pi_I}, \pi_I(h_I))$, and $\tran_{\pi_I}(s_{\pi_I}, a_{\pi_I}, s_{\pi_I}')$ is equal to $\tran(s, a, s')$ if $h_I'$ is $h_I$ augmented with new observations generated in $s'$, and zero otherwise.
        \item The observation generation rule $\mathcal{O}_{\pi_I}(s_{\pi_I}, a_{\pi_I}, o_{\pi_I})$, with $s_{\pi_I} = (s,\pi_I, h_I)$,  
        corresponds to the marginal distribution of the agents' observations $o_i$, $i\in[k] \setminus I$, given the observation generation rule $\mathcal{O}(s, a, o)$ in $\mathcal{G}$ with state $s$ and action $a = (a_{\pi_I}, \pi_I(h_I))$.
        \item The reward of each agent $i$, $R_{\pi_I,i}(s_{\pi_I}, a_{\pi_I})$, where $s_{\pi_I} = (s,\pi_I, h_I)$, is the reward in $\mathcal{G}$, $R_{i}(s, a)$, where $a = (a_{\pi_I}, \pi_I(h_I))$.
        \item The time horizon $T_{\pi_I}$ is the same as the time horizon $T$ in the original game $\mathcal{G}$.        
    \end{itemize}
\end{definition}

\section{Optimal Policy in Stackelberg POMDP}
\optimalpolicy*
\begin{proof}
    Let $\Pi_{\mathcal S}$ denote the set of leader policies in our Stackelberg POMDP and $\pi_{\mathcal S}$ be one of these policies.
    Let $\tau_\mathcal S \sim \Prob(\cdot|\pi_\mathcal S)$ denote a state-action trajectory determined by executing policy $\pi_\mathcal S$ in the Stackelberg POMDP. 
    The optimal policy $\pi^*_\mathcal S$ in Stackelberg POMDP solves 
    \begin{equation}\label{eq:proofPOMDP1}
        \pi^*_\mathcal S \in \underset{\pi_\mathcal S \in \Pi_{\mathcal S}}{\arg \max} \underset{\tau_\mathcal S \sim \Prob_\mathcal{S}(\cdot|\pi_\mathcal S)}{\E} \left[\sum_{\gamma=E\cdot T}^{(E+1)\cdot T-1} R_\mathcal S\left(s_{\mathcal S,\gamma}, a_{\mathcal S,\gamma}\right)\right],        
    \end{equation}
    recognizing $R_\mathcal S\left(s_{\mathcal S,\gamma}, a_{\mathcal S,\gamma}\right)=0$ for $\gamma < E\cdot T$. 
    In the reward phase, the state transitions are dictated by the transitions of $\mathcal G_{\pi^*_{-\ell}}$, which represent the state transitions in $\mathcal G$ 
    using the followers' actions derived from $\pi^*_{-\ell}$ (as defined in Definition~\ref{def:inducedGame}). 
    Thus, for steps $\gamma \ge E\cdot T$, the reward function $R_\mathcal{S}\left(s_{\mathcal{S},\gamma}, a_{\mathcal{S},\gamma}\right)$ is equivalent to $R_{\ell}\left(s_{t}, (a_{\ell, t}, a^*_{-\ell, t})\right)$, where $a^*_{-\ell, t}$ represents the actions of the followers according to the response strategy $\pi^*_{-\ell}$. We let $t = \gamma - E\cdot T$ and reformulate \eqref{eq:proofPOMDP1} as
    \begin{equation}\label{eq:proofPOMDP2}
        \pi^*_\mathcal{S} \in \underset{\pi_\mathcal{S}\in \Pi_\mathcal{S}}{\arg \max } \underset{(\tau_{\ell}, \tau^*_{-\ell}) \sim \Prob \left(\cdot |\pi_\mathcal{S}, \mathcal B \right)}{\E}\left[\sum_{t=0}^{T-1} R_{\ell}\left(s_{t}, (a_{\ell, t}, a^*_{-\ell, t})\right)\right].
    \end{equation}
    We conclude by noticing that, if we restrict the set of leader's policies considered in equation~\eqref{eq:proofPOMDP2} to $\Pi_\ell$, we obtain the leader's Stackelberg problem in Definition~\ref{def:leaderProblem}.
\end{proof}

\section{Learning an Optimal Policy in Stackelberg POMDP}~\label{sec:mappo}

In this section we show how we can use policy gradient methods \citep[chapter 13]{sutton2018reinforcement} to optimize Stackelberg POMDP policies within the set $\Pi_\ell$ of the leader policies in a POMG $\mathcal G$. As we described in the main paper, these policies map histories of game observations to leader actions in our game. In line with our offline design principles, our policy gradient methods assume access to full POMDP states during training (including followers' strategies and observations) but learn policies that only require leader observations to determine their actions. This technique is a variant of the centralized training decentralized execution approach \citep[see e.g.]{LoweWTHAM17}. 

We first consider a parameterized policy $\pi_{\mathcal S,\theta}$ and let $J(\theta)$ be the objective in Equation~\eqref{eq:proofPOMDP2} under this generic policy, i.e., 
$$
     J(\theta) = \underset{\tau_\mathcal S \sim \Prob_\mathcal{S}(\cdot|\pi_{\mathcal S,\theta})}{\E} \left[\sum_{\gamma=0}^{\bar T} R_\mathcal S\left(s_{\mathcal S,\gamma}, a_{\mathcal S,\gamma}\right)\right],
$$
where $\bar T=(E+1)\cdot T-1$ and $\tau_S = (s_{\mathcal S,0}, o_{\mathcal S,0}, a_{\mathcal S,0},.., s_{\mathcal S,\bar T}, o_{\mathcal S,\bar T}, a_{\mathcal S,\bar T})$ is a trajectory of states, observations, and actions in our POMDP. We have that
$$
    \Prob_{\mathcal{S}} \left(\tau_{\mathcal S} |\pi_{\mathcal{S},\theta}\right)= \Prob_{\mathcal{S}} (s_{\mathcal{S},0}) \left (\prod_{\gamma=0}^{\bar T-1} \pi_{\mathcal{S},\theta} (a_{\mathcal{S},\gamma}|h_{\ell,t})\Prob_{\mathcal{S}}(s_{\mathcal{S},\gamma+1}|s_{\mathcal{S},\gamma},a_{\mathcal{S},\gamma})\right ) \pi_{\mathcal{S},\theta} (a_{\mathcal{S},\bar T}|h_{\ell,T-1}).
$$
$$
\Prob_{\mathcal{S}} \left(\tau_{\mathcal S} |\pi_{\mathcal{S},\theta}\right)= \Prob_{\mathcal{S}} (s_{\mathcal{S},0}) \left (\prod_{\gamma=0}^{\bar T-1} \pi_{\mathcal{S},\theta} (a_{\mathcal{S},\gamma}|h_{\ell,t})\Prob_{\mathcal{S}}(s_{\mathcal{S},\gamma+1}|s_{\mathcal{S},\gamma},a_{\mathcal{S},\gamma})\right ) \pi_{\mathcal{S},\theta} (a_{\mathcal{S},\bar T}|h_{\ell,T-1}).
$$
where $t = \gamma\mod T$ and $h_{\ell,t}$ is the history of game observations after $t$ steps of the $\lfloor\gamma / T \rfloor$ game-play in our Stackelberg POMDP episode.
The gradient of $J(\theta)$ with respect to $\theta$ can be expressed as 
\begin{eqnarray*}
    \nabla_{\theta} J(\theta) & = & \underset{\tau_\mathcal S \sim \Prob_\mathcal{S}(\cdot|\pi_\mathcal S)}{\E}\left[\nabla_\theta\log \Prob_{\mathcal{S}}\left(\tau_{\mathcal S} |\pi_{\mathcal{S},\theta}\right)\left(\sum_{\gamma=0}^{\bar T} R_\mathcal S\left(s_{\mathcal S,\gamma}, a_{\mathcal S,\gamma}\right)\right)\right]\\
    & = & \underset{\tau_\mathcal S \sim \Prob_\mathcal{S}(\cdot|\pi_\mathcal S)}{\E}\left[   \sum_{\gamma=0}^{\bar T} \nabla_\theta\log \pi_{\mathcal{S},\theta} (a_{\mathcal{S},\gamma}|h_{\ell,t})  \left(\sum_{\gamma=0}^{\bar T} R_\mathcal S\left(s_{\mathcal S,\gamma}, a_{\mathcal S,\gamma}\right)\right)\right]\\
    & = & \underset{\tau_\mathcal S \sim \Prob_\mathcal{S}(\cdot|\pi_\mathcal S)}{\E}\left[   \sum_{\gamma=0}^{\bar T} \nabla_\theta\log \pi_{\mathcal{S},\theta} (a_{\mathcal{S},\gamma}|h_{\ell,t}) \left(\sum_{\gamma'=\gamma}^{\bar T} R_\mathcal S\left(s_{\mathcal S,\gamma'}, a_{\mathcal S,\gamma'}\right)\right)\right],
\end{eqnarray*}
where the last equality follows since future actions do not affect past rewards in a POMDP. $\nabla_{\theta} J(\theta)$ is approximated by sampling $m$ different trajectories $\tau_{\mathcal S,1},\dots,\tau_{\mathcal S,m}$:
\begin{eqnarray*}
	\nabla_{\theta} J(\theta) & \approx & \frac{1}{m}\sum_{k=1}^m  \sum_{\gamma=0}^{\bar T} \nabla_\theta\log \pi_{\mathcal{S},\theta} (a^k_{\mathcal{S},\gamma}|h^k_{\ell,t}) \left(\sum_{\gamma'=\gamma}^{\bar T} R_\mathcal S\left(s^k_{\mathcal S,\gamma'}, a^k_{\mathcal S,\gamma'}\right)\right).
\end{eqnarray*}
One problem with this approach is that its gradient approximation has high variance as its value depends on sampled trajectories. A common solution is to replace each term $\sum_{\gamma=0}^{\bar T} R_\mathcal S\left(s^k_{\mathcal S,\gamma'}, a^k_{\mathcal S,\gamma'}\right)$ with $Q(s^k_{\mathcal S,\gamma'}, a^k_{\mathcal S,\gamma'})$, where  
\begin{equation*}
	Q (s_{\mathcal S},a_{\mathcal S}) = \E_{s_{\mathcal S}'\sim \Prob_{\mathcal S}(\cdot|s_{\mathcal S},a_{\mathcal S})} [R_{\mathcal S}(s_{\mathcal S},a_{\mathcal S})+ \E_{a_{\mathcal S}' \sim  \pi_{\mathcal{S},\theta} (a_{\mathcal{S}}, h_t)} [Q(s_{\mathcal S}',a_{\mathcal S}')]]
\end{equation*}
is the ``critic". Note that $Q (s_{\mathcal S},a_{\mathcal S}) $ can be accessed at training time as we have access to the full state of the POMDP.

\section{No-Regret Followers} 
\label{sec:follwerModel}

In this section, we take up a specific kind of policy-interactive  response algorithm and show that it leads the followers to the set of Bayesian coarse correlated equilibria.

To simplify the exposition, throughout this section we  assume that the POMG $\mathcal G_{\pi_\ell}$ among the followers consists of a single step ($T=1$), where each follower $i$ observes its type $\type_i$ and picks an action $a_i\in A_i$, receiving reward $R_i(\theta_i,a)$. 
Note that even in this single step case, we still need
 to model the leader's problem as (multi-step) POMDP because we need to capture followers' learning dynamics 
 across games.\footnote{One could reformulate a general POMG $\mathcal G$ into a POMG $\mathcal G'$
in which the followers only operate in the first step, and where the action they take in this step determines their policy throughout the  original game $\mathcal G$. In general, this formulation would 
 lead to POMGs with very large action spaces, but remains tractable in the applications
 we consider in the present paper. Indeed, in normal form games, matrix design games, and simple allocation mechanisms, followers only operate in one step. In sequential price mechanisms with messages, each follower interacts with the mechanism twice, but the dominant-strategy in the second interaction is  to be 
truthful \citep[e.g.,]{brero2021reinforcement}.} 
\if 0
\dcp{make more clear whether this is just for exposition, or restricts the theorem. The footnote doesn't help, in that it justifies the assumption for
our applications. If it's wlog, then why justify it?} \gb{Added comment on wlog to footnote. But maybe a bit generic and better to go in the direction of restricting theorem?} \dcp{it helps me, in that now I see why we say wlog. I think what we can do is first of all explain this for a game with a single action by each agent (doesn't need to single step, but one action per agent). include a FN to explain how this fits our domains in the present paper and how we nevertheless have a POMDP because we need to model learning dynamics as well as multiple steps, even if each player only plays. give the theory. then we could explain this sense in which it generalizes.}\gb{But if we allow for followers' actions to be taken in different steps, we cannot state in Algorithm 1 that each agent "simultaneously and independently chooses an action in $A_i$" as actions can be conditioned on other sellers' actions taken before. Thus, we'll have to talk about entire game strategies anyway and introduce new notation.} \dcp{Ok. In this case I think we should just do all of this for followers that respond in a single step, and state this in Thm 4.1. We should also rephrase the footnote to drop the first part, which I don't think is helpful. I suggest to phrase this as ``This is sufficient for the applications presented in the present paper. In normal form games, matrix ..." (to end of FN).}\gb{You mean, we define something like ``single step response $\mathcal G_{\pi_\ell}$'' and formulate def 4.1 and theorem 4.1 for these games?}\dcp{yes, perhaps with a brief discussion at the end of the section}
\fi

For the PI response algorithm, we adopt in this application 
{\em  no-regret dynamics for Bayesian games}, as described by~\citet{hartline2015no}. In these dynamics, each agent independently runs a no-regret algorithm for each possible type it might have. At each step, a type profile $\type=(\type_1,\ldots,\type_n)$ is sampled from $\mathcal D$, with each agent $i$  simultaneously choosing
 its actions using the strategy associated with its type $\type_i$. 
After observing all other agents' actions in the game $\mathcal{G}$,
 each agent then
 uses a no-regret algorithm to update its strategy for its type $\type_i$ (see Algorithm~\ref{alg:noRegret}).

\begin{algorithm}[t]
    \SetKwFor{RepTimes}{repeat}{times}{end}
    \SetEndCharOfAlgoLine{;}
    \textbf{parameters:} number of iterations $M$\;
    
    \RepTimes{$M$}{
        Sample type profile $\type=(\type_1,\ldots,\type_n)$ from distribution $\mathcal D$.\;
        
        Each agent $i$ simultaneously and independently chooses an action $a_i\in A_i$ using a no-regret algorithm that is specific for type $\type_i$.\;
        
        Each agent $i$ obtains payoff $R_i(\type_i,a)$ and updates its strategy by feeding the no-regret algorithm with reward $R_i(\type_i, (a'_i,{a}_{-i}))$ for every $a'_i\in A_i$.
    }	
    \caption{Iterative No Regret Dynamics}\label{alg:noRegret}
\end{algorithm}

Specifically, we use {\em multiplicative-weights}, which we describe in the next section. This algorithm maintains an internal state consisting of agent-specific weights assigned to each action for each possible agent type. These weights represent probabilities over actions and are adjusted based on the rewards obtained by the agent during gameplay.

We next introduce the solution concept of 
{\em Bayesian Coarse Correlated Equilibrium (BCCE)}~\citep{hartline2015no}, which is an extension of the {\em Coarse Correlated Equilibrium} to
 Bayesian settings. Importantly,   the BCCE  arises
 naturally from  no-regret dynamics.
\begin{definition}[Bayesian coarse correlated equilibrium] \label{def:bcce}
    Consider a Bayesian game with $n$ agents, each with type space $\typeSpace_i$ ($\typeSpace=\times \typeSpace_i$), action space ${A}_i$ (${A}=\times {A}_i$), and a payoff function $R_i:\typeSpace_i\times A_i\rightarrow\mathbf R$. Let $\D$ be the joint type distribution, and $\D|_{\type_i}$ the type distribution of all agents but $i$, conditioned on agent $i$ having type $\type_i$. Let $\pi:\typeSpace \rightarrow A$ be a joint randomized strategy profile that maps the type profile of all agents to an action profile. Strategy profile $\pi$ is a {\em Bayesian Coarse-Correlated Equilibrium (BCCE)} if for every $a_i \in  A_i$, and for every $\type_i\in \typeSpace_i$, we have
\begin{align}
	\E_{\type_{-i} \sim \D|_{\type_i}} [R_i(\pi(\type), \type_i)] \ \geq\ \E_{\type_{-i}\sim \D|_{\type_i}}[R_i((a_i,\pi(\type)_{-i}), \type_i)],
\end{align}

    where $\pi(\type)_{-i}$ is a mapping from type profile to actions, excluding agent $i$'s action.
\end{definition}

It is noteworthy that in the definition above, agents jointly map their realized types to actions. 

This concept of BCCE generalizes many other known solution concepts such as \textit{Mixed Nash Equilibria} and \textit{Bayes-Nash Equilibria}, where $\pi=\prod \pi_i$ and $\pi(\type)=\prod \pi_i(\type_i)$, respectfully; that is, each agent chooses an action using a separate strategy $\pi_i$, without correlating the action with other agents. 

 \citet{hartline2015no} prove that  the no regret dynamics described in Algorithm~\ref{alg:noRegret} converge to a  BCCE for the case of independent types.   We generalize this convergence result to the
 case of correlated types.  
%
\begin{restatable}{theorem}{bccconverge} \label{prop:bcce-correlated}
    For every type distribution $\D$, the empirical distribution over the history of actions in the iterative no regret dynamics algorithm for any type profile $\type$ converges to the set of BCCEs. That is, for every $\D$, $i$, $\type_i$, and $a_i$, we have 
    \begin{align}
        \lim_{M\rightarrow \infty}  \E_{\type_{-i}\sim \D|_{\type_i}}[R_i(\pi^\mathrm{emp}(\type), \type_i)] -
        \E_{\types_{-i}\sim \D|_{\type_i}} [R_i((a_i,\pi^\mathrm{emp}(\type)_{-i}), \type_i)] &\ge 0,
    \end{align}
    where $\pi^\mathrm{emp}$ is the empirical distribution after $M$ steps. 
\end{restatable}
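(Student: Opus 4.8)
The plan is to establish the B-CCE inequality separately for each fixed agent $i$, type $t_i \in \mathcal{T}_i$, and candidate deviation $a_i \in \mathcal{A}_i$, by pushing the per-type no-regret guarantee through the definition of the empirical strategy $\sigma^\mathrm{emp}$. Fix such a triple $(i,t_i,a_i)$ and let $R_i(t_i) \subseteq \{1,\dots,T\}$ be the set of rounds in which the sampled type of agent $i$ equals $t_i$; write $\bids^\tau$ for the action profile realized in round $\tau$. On exactly the rounds in $R_i(t_i)$, agent $i$ feeds its dedicated no-regret algorithm the counterfactual payoffs $P_i(t_i,(a_i',\bids^\tau_{-i}))$ for all $a_i'$, so its no-regret property yields a pathwise bound of the form
\begin{equation*}
\sum_{\tau \in R_i(t_i)} P_i(t_i,\bids^\tau) \;\ge\; \sum_{\tau \in R_i(t_i)} P_i\bigl(t_i,(a_i,\bids^\tau_{-i})\bigr) - \mathrm{Reg}\bigl(|R_i(t_i)|\bigr),
\end{equation*}
where $\mathrm{Reg}(m)/m \to 0$ as $m \to \infty$.

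First I would normalize this inequality by $|R_i(t_i)|$, so the regret term becomes $\mathrm{Reg}(|R_i(t_i)|)/|R_i(t_i)| \to 0$. Writing $p_i(t_i) = \Pr_{\types \sim \D}[t_i]$ for the marginal probability that agent $i$ draws type $t_i$, the strong law of large numbers gives $|R_i(t_i)|/T \to p_i(t_i)$, so whenever $p_i(t_i)>0$ we have $|R_i(t_i)| \to \infty$ almost surely and the normalized regret vanishes. When $p_i(t_i)=0$ the type is never realized and the corresponding inequality is vacuous, since $\D|_{t_i}$ never arises.

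The crux is to identify the two normalized empirical averages with the expectations in the statement. I would define $\sigma^\mathrm{emp}(\types)$ to be the empirical distribution of action profiles over the rounds whose realized type profile is exactly $\types$. Decomposing the sum over $R_i(t_i)$ according to the realized opponent types gives
\begin{equation*}
\frac{1}{|R_i(t_i)|}\sum_{\tau \in R_i(t_i)} P_i(t_i,\bids^\tau) \;=\; \sum_{\types_{-i}} \frac{|R_i(t_i,\types_{-i})|}{|R_i(t_i)|}\; \E_{\bids \sim \sigma^\mathrm{emp}(t_i,\types_{-i})}\bigl[P_i(t_i,\bids)\bigr],
\end{equation*}
and analogously for the deviation term with $\bids$ replaced by $(a_i,\bids_{-i})$. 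The empirical conditional frequency $|R_i(t_i,\types_{-i})|/|R_i(t_i)|$ converges almost surely to $\D|_{t_i}(\types_{-i})$ by the law of large numbers, which is exactly where the argument generalizes beyond the independent-type setting of \citet{hartline2015no}: correlation in $\D$ is absorbed into these conditional empirical frequencies, and no product structure on $\sigma^\mathrm{emp}$ is required. Hence both normalized averages converge to $\E_{\types_{-i}\sim \D|_{t_i}}[P_i(\sigma^\mathrm{emp}(\types),t_i)]$ and $\E_{\types_{-i}\sim \D|_{t_i}}[P_i((a_i,\sigma^\mathrm{emp}(\types)_{-i}),t_i)]$ respectively, and combining with the vanishing regret yields the claimed inequality in the limit.

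The main obstacle I anticipate is making the interchange of limits fully rigorous: $\sigma^\mathrm{emp}$ is itself a random object adapted to the realized play, so one must argue that the pathwise regret bound and the almost-sure convergence of the conditional type frequencies hold simultaneously. Because the type and action spaces are finite, there are only finitely many triples $(i,t_i,a_i)$ and opponent profiles $\types_{-i}$ to control, so a union bound delivers joint almost-sure convergence; with payoffs bounded, bounded convergence then also gives the statement in expectation if desired. I would present this finite-cardinality union bound as the enabling observation that keeps the limiting step routine rather than delicate.
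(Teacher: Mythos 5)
Your proposal is correct and follows essentially the same route as the paper's own proof: decompose the time-average over the rounds where type $t_i$ is realized according to the opponents' sampled type profiles, argue that the empirical conditional frequencies converge to $\D|_{t_i}$ (the paper invokes Glivenko--Cantelli where you invoke the law of large numbers, which amounts to the same thing on finite type spaces), identify the resulting averages with the expectations under $\sigma^\mathrm{emp}$, and close with the per-type no-regret guarantee. Your explicit handling of the $p_i(t_i)=0$ case and the union bound over finitely many triples are sensible additions but do not change the substance of the argument.
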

\begin{proof}
   Fix agent $i$, and a type $\type_i$. Let $\tau(\type_i)$ denote the set of time steps at which $\type_i$ is sampled as the type of agent $i$, $\tau(\type)$ denote the set of time steps at which $\type$ is sampled as the type profile, and $\tau(\type,a)$ denote the set of time steps at which $\types$ is sampled and $a$ were the agents' actions. The expected value of agent $i$ with type $\type_i$ when playing the agent's action according to the empirical distribution of actions $\pi^\mathrm{emp}$ is:
	\begin{eqnarray}
	   \E_{\type_{-i}\sim \D|_{\type_i}}\left[R_i(\pi^\mathrm{emp}(\types),\type_i)\right] & = & \sum_{\type_{-i}}\Pr[\type_{-i}]\sum_\bids \Pr[\pi^\mathrm{emp}(\types)=\bids]R_i(\bids,\type_i)\nonumber\\
		 & = & \sum_{\types_{-i}}\frac{|\tau(\types)|}{|\tau(\type_i)|}\sum_\bids  \Pr[\pi^\mathrm{emp}(\types)=\bids]R_i(\bids,\type_i)\ + \nonumber\\
		 & & \left(\Pr[\types_{-i}]-\frac{|\tau(\types)|}{|\tau(\type_i)|}\right)\sum_\bids  \Pr[\pi^\mathrm{emp}(\types)=\bids]R_i(\bids,\type_i).\label{eq:misestimation}
	\end{eqnarray}
	
	By the Glivenko-Cantelli Theorem, as $M\rightarrow \infty$, $\left(\Pr[\types_{-i}]-\frac{|\tau(\types)|}{|\tau(\type_i)|}\right)\rightarrow 0$. Since it is multiplied by $$\sum_a \Pr[\pi^\mathrm{emp}(\types)=\bids]R_i(\bids,\type_i) \leq \max_{\bids'} R_i(\bids',\type_i),$$ a bounded term, the second summand of Eq.~(\ref{eq:misestimation}) goes to 0 as $M\rightarrow\infty$.
	
	As for the first summand, we have
	\begin{eqnarray*} 
		\sum_{\types_{-i}}\frac{|\tau(\types)|}{|\tau(\type_i)|}\sum_\bids \Pr[\pi^\mathrm{emp}(\types)=\bids]R_i(\bids,\type_i)
		& = & \sum_{\types_{-i}}\frac{|\tau(\types)|}{|\tau(\type_i)|}\sum_\bids \frac{|\tau(\types,\bids)|}{|\tau(\types)|}R_i(\bids,\type_i)\\
		& = & \left(\sum_{\types_{-i}}\sum_\bids |\tau(\types,\bids)| R_i(\bids,\type_i)\right)/|\tau(\type_i)|\\
		& = & \left(\sum_{\types_{-i}}\sum_\bids\sum_{\tau\in \tau(\types)} R_i(\bids,\type_i)\cdot {1}_{\bids^\tau=\bids}\right)/|\tau(\type_i)|\\
		& = & \left(\sum_{\types_{-i}}\sum_{\tau\in \tau(\types)}\sum_\bids R_i(\bids,\type_i)\cdot {1}_{\bids^\tau=\bids}\right)/|\tau(\type_i)|\\
		& = &\left(\sum_{\types_{-i}}\sum_{\tau\in \tau(\types)} R_i(\bids^\tau,\type_i)\right)/|\tau(\type_i)| \\
		& = & \sum_{\tau\in \tau(\type_i)} R_i(\bids^\tau,\type_i)/|\tau(\type_i)|.
	\end{eqnarray*}
	
	Therefore, we can write Eq.~(\ref{eq:misestimation}) as 
	\begin{eqnarray*}
	     \E_{\types_{-i}\sim \D|_{\type_i}}\left[R_i(\pi^\mathrm{emp}(\types),\type_i)\right] \ =\  \sum_{\tau\in \tau(\type_i)} R_i(\bids^\tau,\type_i)/|\tau(\type_i)| \ + \ \alpha(M),
	\end{eqnarray*}
	where $\alpha(M)\rightarrow 0$ as $M\rightarrow \infty$.
	
	By replacing $\pi^{\mathrm{emp}}(\types)_i$ with an arbitrary fixed action $a_i'$ in the above derivation, we get that
	
		\begin{eqnarray*}
		\E_{\types_{-i}\sim \D|_{\type_i}}\left[R_i((a'_i,\pi^\mathrm{emp}(\types)_{-i}),\type_i)\right]  \ = \  \sum_{\tau\in \tau(\type_i)} R_i((a'_i,\bids^\tau_{-i}),\type_i)/|\tau(\type_i)| \ + \ \beta(M),	
	\end{eqnarray*}
	where $\beta(M)\rightarrow 0$ as $M\rightarrow\infty$.
	
	Since agent $i$ uses a no-regret algorithm for each type $\type_i$, we have that for $M\rightarrow\infty$ (which implies $|\tau(\type_i)|\rightarrow\infty$),
	\begin{eqnarray*}
	     \E_{\bids^1,\ldots,\bids^T}\left[\sum_{\tau\in \tau(\type_i)}R_i(\bids^\tau,\type_i)/|\tau(\type_i)|\right] \  \geq \  \E_{\bids^1,\ldots,\bids^M}\left[\sum_{\tau\in \tau(\type_i)}R_i((a'_i,\bids^\tau_{-i}),\type_i)/|\tau(\type_i)|\right] \ -\  o(1),
	\end{eqnarray*}
	for every $a'_i$, where the expectation is over the randomization of the no regret algorithm. Therefore, we have that as $M\rightarrow \infty$,
	\begin{eqnarray*}
		\E_{\bids^1,\ldots,\bids^M}\E_{\types_{-i}\sim \D|_{\type_i}}\left[R_i(\pi^\mathrm{emp}(\types),\type_i)\right] & = & \E_{\bids^1,\ldots,\bids^M}\left[\sum_{\tau\in \tau(\type_i)}R_i(\bids^\tau,\type_i)/|\tau(\type_i)|\right] + \alpha(M)\\
		&\geq& \E_{\bids^1,\ldots,\bids^M}\left[\sum_{\tau\in \tau(\type_i)}R_i((a'_i,\bids^\tau_{-i})\type_i)/|\tau(\type_i)|\right] \\ & &\ +\ \alpha(M)\ -\ o(1)\\
		& = & \E_{\bids^1,\ldots,\bids^M}\E_{\types_{-i}\sim \D|_{\type_i}}\left[R_i((a'_i,\pi^\mathrm{emp}(\types)_{-i}),\type_i)\right] \\ & & \ +\ \alpha(M)\ -\ \beta(M)\  -\ o(1)\\
		& = & \E_{\bids^1,\ldots,\bids^M}\E_{\types_{-i}\sim \D|_{\type_i}}\left[R_i((a'_i,\pi^\mathrm{emp}(\types)_{-i}),\type_i)\right]-o(1), 
	\end{eqnarray*}
	which implies that as $M\rightarrow\infty$, the iterative no-regret dynamics converge to a BCCE. 
 \end{proof}

\section{Multiplicative-Weights Algorithm} \label{app:mw_alg}

In this section, we  describe the multiplicative-weights algorithm that we use to update followers' strategies (Algorithm~\ref{alg:MW}). 
The algorithm has three parameters.
The payoff function of the agents (in our case, this is the payoff function of the game induced by a leader's policy $\pi_\ell$), 
which determines the game played by our agents, a real number $\varepsilon>0$ controlling the magnitude of the weight updates, and an integer $M>1$ determining the number of iterations. 
Each follower $i$ is assigned a weight matrix $w_i$ of size $|\typeSpace_i|$ by $|A_i|$, where all entries are initialized to 1. 
The following procedure is repeated $M$ times: First, a type profile $(\type_1,..,\type_n)$ is sampled from distribution $\D$. Then, for each follower $i$ an action $a_i$ is sampled according to weights $w_i[\type_i][a_i]$. Finally, we compute each agent $i$'s payoffs when choosing any action $a_i'$, and when the other agents are playing $a_{-i}$; we scale each weight $w_i[\type_i][a_i']$ accordingly.
\begin{figure}[tb]
\begin{algorithm}[H]
\caption{The Multiplicative-Weights Algorithm}
\label{alg:MW}
\begin{algorithmic}[1]
    \REQUIRE game payoff function $R$, update parameter $\varepsilon$, number of iterations $M$
    \STATE Initialize weights $w_i[\type_i][a_i] = 1$ for each agent $i \in [n]$, type $\type_i\in \typeSpace_i$, and action $a_i\in A_i$ \label{alg:weightsInitialization}
    \FOR {$M$ times} \label{alg:startMWIterations}
        \STATE sample valuation profile $\types=(\type_1,..,\type_n)$ from distribution $\mathcal D$ \label{alg:sampleTypes}
        \STATE generate action profile $a = (a_1,..,a_n)$ where each $a_i$ is sampled with probability $\propto w_i[\type_i][a_i]$ \label{alg:sampleActions}
        \FOR {each agent $i\in[n]$} \label{alg:beginForAgents}
            \FOR {each action $a_i'\in A_i$} \label{alg:beginForActions}
                \STATE update weight $w_i[\type_i][a_i'] \leftarrow w_i[\type_i][a_i'] * (1+ \varepsilon)^{r_i}$, where $r_i = R_i(\theta_i, (a_i', a_{-i}))$ \label{alg:weightsUpdate}
            \ENDFOR \label{alg:endForActions}
        \ENDFOR \label{alg:endForAgents}
    \ENDFOR
\end{algorithmic}
\end{algorithm}
\end{figure}

\section{Illustrating the Stackelberg POMDP for the MSPM Problem} \label{app:stack_mspm}


In this section, we describe the Stackelberg POMDP for the MSPM setting, and 
assuming that followers use the multiplicative-weights algorithm (see Algorithm~\ref{alg:MW}). 

\subsection{Policy Interactive Response Algorithm}
We start by describing how to instantiate our multiplicative-weights algorithm as a policy interactive (PI) response algorithm (see Definition~\ref{def:PIAlgorithm}). 
\begin{itemize}
    \item \textbf{States:} A generic PI state, denoted as $s_b$, includes:  
        \begin{itemize}
            \item Each agent $i$'s weights $w_i$ (Algorithm~\ref{alg:MW} Line~\ref{alg:weightsInitialization}),
            \item Iteration counter $m \in [M]$ (Algorithm~\ref{alg:MW} Line~\ref{alg:startMWIterations}),
            \item The followers' valuation profile $\theta$ sampled in Algorithm~\ref{alg:MW} Line~\ref{alg:sampleTypes},
            \item The followers' actions (messages) $a$ (this is message profile $\mu$) sampled in Algorithm~\ref{alg:MW} Line~\ref{alg:sampleActions} 
            \item The index $i$ indicating the agent currently undergoing a weights update, and the corresponding action (message) $a_i'$ (this is message $\mu_i'$) under exploration (Algorithm~\ref{alg:MW} Lines~\ref{alg:beginForAgents}-\ref{alg:beginForActions}). 
        \end{itemize}
        We note that states $s_b$ can be expressed as $s_b = (\pi_{-\ell}, s_b^+)$ since the strategies of the followers, $\pi_{-\ell}$, played during the response phase can be derived by the messages $\mu$, the agent index $i$, and the message $\mu_i'$ under exploration.
    \item \textbf{State transitions:} State transitions are described as follows:
        \begin{itemize}
            \item The initial response state, $s_{b,0}$, is determined by: 
            \begin{itemize}
                \item Initializing weights $w$ as an array of all 1s (Algorithm~\ref{alg:MW}, Line~\ref{alg:weightsInitialization})
                \item Setting the iteration counter $m$ to 1 (Algorithm~\ref{alg:MW}, Line~\ref{alg:startMWIterations})
                \item Sampling a type profile $\type$ from distribution $\D$ (Algorithm~\ref{alg:MW}, Line~\ref{alg:sampleTypes})
                \item Sampling followers' messages $\mu$ from the distribution determined by normalizing the weights $w$ for the types $\type$ (Algorithm~\ref{alg:MW}, Line~\ref{alg:sampleActions})
                \item Setting the agent index $i$ to zero and message under exploration $\mu_i'$ to the first message of agent $i$ (assuming messages are ordered) (Algorithm~\ref{alg:MW}, Lines~\ref{alg:beginForAgents}-\ref{alg:beginForActions}) 
            \end{itemize}
            \item Any state $s_{b,e}$ where $e \mod \left( \sum_{i \in [n]} |A_i| \right) \neq 0$ (i.e., inside the loop between Lines~\ref{alg:beginForAgents}-\ref{alg:endForAgents} in Algorithm~\ref{alg:MW}) is derived from $s_{b,e-1}$ by
            updating weight $w_i[\theta_i][\mu_i']$ using reward $r_i$ from game outcome $\mathcal{O}$ as per Algorithm~\ref{alg:MW}, Line~\ref{alg:weightsUpdate}, and 
            \begin{itemize}
                \item Proceeding to the next message under exploration in the loop of Line~\ref{alg:beginForActions}, or
                \item Moving to the next agent in the loop of Line~\ref{alg:beginForAgents} if all messages have been explored.
            \end{itemize}
            \item Any state $s_{b,e}$ satisfying $e \mod \left( \sum_{i\in [n]} |A_i| \right) = 0$ is derived from $s_{b,e-1}$ by:
            \begin{itemize}
                \item Resampling the type profile $\theta$ and followers' messages $\mu$,
                \item Reinitializing the agent index $i$ to zero, and
                \item Resetting the message under exploration, $\mu_i'$, to the first message in the list.
            \end{itemize}
        \end{itemize}
        \item \textbf{Time horizon:} The time horizon $E$ is determined by $\sum_{i\in [n]} |A_i| \cdot M$.    
\end{itemize}

\subsection{Stackelberg POMDP}
Using the PI response algorithm described above, we can describe our Stackelberg POMDP for MSPM as follows:
\begin{itemize}
    \item \textbf{States:} Each state $s_{\mathcal S}$ in the Stackelberg POMDP for the MSPM problem comprises of the PI algorithm state $s_b$ and a game state $s$ that includes:
    \begin{itemize}
        \item The types $\theta$ of the followers in the current gameplay, which are the same as those in the algorithmic state,
        \item The followers' messages $\mu_g$ in the current game play,
        \item The partial allocation $\alloc$ at the current game step, and
        \item The residual setting $\rho$ at the current game step (specifying the unvisited agents and unsold items).
    \end{itemize}
     \item \textbf{Observations:} These consist of messages $\mu_g$, partial allocation $\alloc$, and the residual setting $\rho$ at the current state.
     \item \textbf{Actions:} {The action of the leader} consists of the agent $i$ to be visited, and the prices $p$ to be quoted to this agent, in the upcoming game step. 
    \item \textbf{State transitions and observation generation:} While the PI algorithm states $s_b$ are updated following the description of our PI response algorithm, the game states are updated as described below:
    \begin{itemize}
        \item \textbf{Game initialization:} 
        We then initialize the partial allocation as $\alloc=\emptyset$ and the residual setting as $\rho=(\mathbf{1}_{n+m})$. Then,
        \begin{itemize}
            \item \textbf{Followers' response phase:} If we are in this phase, types $\theta$ and in-game messages $\mu_g$ are derived from the algorithmic state (with messages $\mu_g$ being obtained from messages $\mu$ by substituting the message under exploration $\mu_i'$).
            \item \textbf{Reward phase:} otherwise, we sample types $\theta$ from distribution $\mathcal D$ and let $\mu_g$ be the equilibrium strategies derived in the response phase (in our implementation, these are derived from the weights in the final algorithmic state).
        \end{itemize}
        \item \textbf{Game progression:} 
        The game advances for $n$ steps. During each step, the leader policy observes $o = (\mu_g, \alloc, \rho)$ and takes action $(i, p)$, where $i$ is the next selected agent and $p$ is the vector of posted prices. Agent $i$ chooses a set of items $x$ that maximize its profit at prices $p$. The new game state is obtained by adding the bundle $x$ selected by agent $i$ to the previous partial allocation to form a new partial allocation $\alloc$, and the items and agent are removed from the residual setting $\rho$. If there are no more agents to visit (i.e., game ends), game state is restarted as described above. If we are in the reward phase, a \textbf{reward} capturing the objective of the designer, for example the social welfare or revenue to the seller,
        is generated.
    \end{itemize}
\end{itemize}

\section{Experiments in Non-Bayesian Settings}
\subsection{Normal Form Games}
\label{sec:normalFormGames}
We test our approach on complete information, normal form games involving one leader and one follower. In these scenarios, the no-regret learning dynamics lead to follower best responses, and these responses 
are unique in {any} Stackelberg equilibrium of the games that we consider.

In a normal-form game, the agent payoffs are specified via a matrix. The leader's strategy is to choose  a row, and as a response, the follower chooses a column. In the randomized variant, the leader's strategy space is the set of distributions over the rows. In both cases, the follower's optimal strategy is deterministic; i.e., to choose a single row.
We consider the two normal form games from \citet{zhang19bi-level}, namely \textit{Escape} and \textit{Maintain}. 

\subsubsection{The {\em Escape} Game}
\begin{figure}[t]
    \centering
    \begin{subfigure}[m]{0.45\textwidth}
        \centering
        \setlength{\extrarowheight}{2pt}
        \begin{tabular}{c|c|c|c|}
            \multicolumn{1}{c}{} & \multicolumn{1}{c}{$A$}  & \multicolumn{1}{c}{$B$} & \multicolumn{1}{c}{$C$} \\
            \cline{2-4}
            $A$ & $(15,15)$ & $(10,10)$ & $(0,0)$ \\\cline{2-4}
            $B$ & $(10,10)$ & $(10,10)$ & $(0,0)$ \\\cline{2-4}
            $C$ & $(0,0)$ & $(0,0)$ & $(30,30)$ \\\cline{2-4}
        \end{tabular}
    \end{subfigure}
    \hfill
    \begin{subfigure}[m]{0.45\textwidth}
        \centering
        \includegraphics[width=\textwidth]{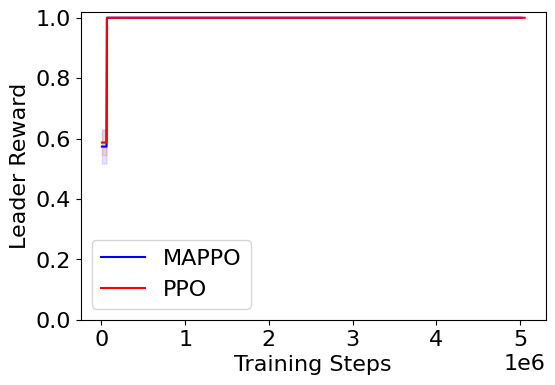}

    \end{subfigure}
        \caption{Training curves for \textit{Escape} averaged over 25 runs. Standard errors are displayed in lighter shades. The leader policy is trained using MAPPO or PPO in the Stackelberg POMDP. Rewards are normalized between 0 and 1.}\label{fig:matrixGame1}
\end{figure}

In this section we give results for the {\em Escape} game in \citet{zhang19bi-level} described in Figure~\ref{fig:matrixGame1}. Here, the optimal leader action is to play row $C$. Indeed, if the leader chooses $C$, the follower responds with column $C$ and both follower and leader realize their maximum payoff.
{This is also a Nash equilibrium, as is $(A,A)$ and $(B,B)$.}
 As we can see from Figure \ref{fig:matrixGame1}, the leader learns to play row C immediately, with both MAPPO and PPO used to solve the Stackelberg POMDP.

\subsubsection{The \textit{Maintain} Game}
\begin{figure}[t]
    \centering
    \begin{subfigure}[m]{0.45\textwidth}
        \centering
        \begin{tabular}{c|c|c|c|}
            \multicolumn{1}{c}{} & \multicolumn{1}{c}{$A$}  & \multicolumn{1}{c}{$B$} & \multicolumn{1}{c}{$C$} \\
            \cline{2-4}
            $A$ & $(20,15)$ & $(0,0)$ & $(0,0)$ \\\cline{2-4}
            $B$ & $(30,0)$ & $(10,5)$ & $(0,0)$ \\\cline{2-4}
            $C$ & $(0,0)$ & $(0,0)$ & $(5,10)$ \\\cline{2-4}
        \end{tabular}
    \end{subfigure}
    \hfill
    \begin{subfigure}[m]{0.45\textwidth}
        \centering
        \includegraphics[width=\textwidth]{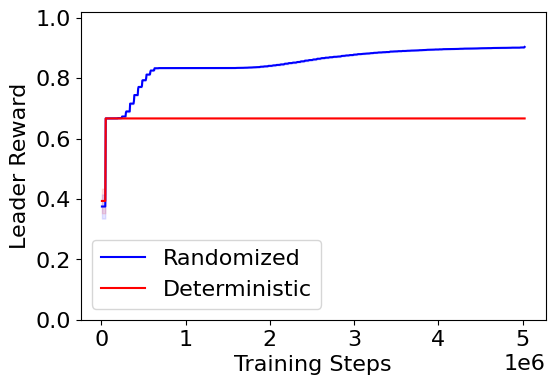}
    \end{subfigure}
    \caption{Training curves for \textit{Maintain} averaged over 25 runs. Standard errors are displayed in lighter shades. Results show leader policy training using MAPPO in the Stackelberg POMDP under deterministic and randomized leader policy. Rewards are normalized between 0 and 1.}\label{fig:matrixGame2}
\end{figure}

Figure~\ref{fig:matrixGame2} describes the \textit{Maintain} game.
In {\em Maintain}, the optimal {deterministic} leader action is $A$, with the follower responding with action $A$ and generating payoff $20$ to the leader. {This contrasts with the Nash equilibria, which are $(B,B)$ and $(C,C)$ (pure), and one in which each player independently mixes on $B$ and $C$.}

The leader can further improve its payoff by picking a randomized action, and playing $A$ with probability $0.25+\varepsilon$ and $B$ with probability $0.75-\varepsilon$, {for some small $\varepsilon>0$.}
In this scenario, the follower  still maximizes its payoff by playing $A$, and the leader's expected utility is $27.5-10\varepsilon$. 
In order to accommodate randomized leader strategies, we modify the original POMG $\mathcal G$, letting the leader's action be an assignment of weights, one to each row, with these weights selected from the space of non-negative real vectors. The row weights are normalized to obtain probabilities over the action of the leader in the matrix form game.
%

The results of using Stackelberg POMDP on {\em Maintain}
 are given in Figure~\ref{fig:matrixGame2}. MAPPO immediately learns to play the optimal strategy when {leader} actions are deterministic and approximately optimal leader strategy in the randomized scenario.

\subsection{Matrix Design Game}
\label{sec:matrixDesignGame}
\begin{figure}[t]
    \centering
    \begin{subfigure}[m]{0.45\textwidth}
        \centering
        \setlength{\extrarowheight}{2pt}
        \begin{tabular}{c|c|c|}
            \multicolumn{1}{c}{} & \multicolumn{1}{c}{$A$}  & \multicolumn{1}{c}{$B$} \\\cline{2-3}
            $A$ & $(3+p^{A}_1,3+p^{A}_2)$ & $(6,4)$ \\\cline{2-3}
            $B$ & $(4,6)$ & $(2+p^{B}_1,2+p^{B}_2)$ \\\cline{2-3}
        \end{tabular}
    \end{subfigure}
    \hfill
    \begin{subfigure}[m]{0.45\textwidth}
        \centering
        \includegraphics[width=\textwidth]{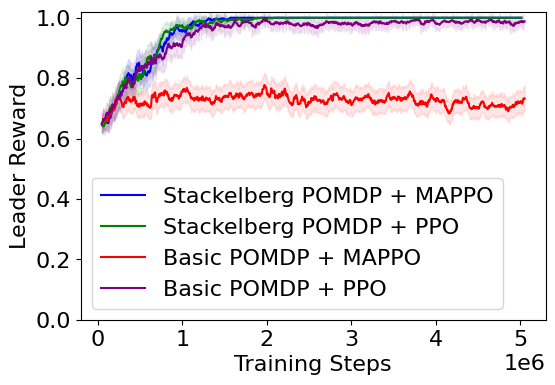}

    \end{subfigure}
    \caption{The training curves for the Matrix Design Game, introduced in \citet{monderer2003k}, illustrate the mean outcomes across 25 simulations. The standard errors are indicated by the lighter shaded areas. The designer’s reward has been normalized to a range of 0 to 1.}\label{fig:matrixDesignGame}
\end{figure}

In this section, we perform the same ablation studies of Section~\ref{sec:simpleAllocationMechanisms} in the \textit{Matrix Design setting} introduced by \citet{monderer2003k}. Figure~\ref{fig:matrixDesignGame} illustrates this setting. Here, the leader aims to prevent agents from choosing the same action. The unmodified game has two Nash equilibria, $(A,B)$ and $(B,A)$ that make it hard for followers to coordinate for maximum welfare. To simplify coordination, the leader can offer payments in the diagonal cells, creating desirable dominant strategy equilibria. For instance, by promising a payment of $p_1^A=p_2^B=4$ for the actions $A$ by agent 1 and $B$ by agent 2, the leader can make $(A,B)$ the sole dominant strategy equilibrium.

Figure~\ref{fig:matrixDesignGame} shows the performance of the four learning variants introduced in Section~\ref{sec:simpleAllocationMechanisms} and derived from applying MAPPO and PPO algorithms in both our Stackelberg POMDP and the Basic POMDP. The results are largely consistent with those in Section~\ref{sec:simpleAllocationMechanisms}, with the exception of Stackelberg POMDP + PPO, which also performed optimally. As discussed in Section~\ref{sec:simpleAllocationMechanisms}, we argue that this is because this setting is a complete information setting and followers do not have private types that remain unrevealed to the critic in PPO.

\section{MSPM Settings: Description and Analyses}

In this section, we present one optimal mechanism and its associated equilibrium bidding strategy for each configuration of types and messages within our generalized MSPM setting.

\begin{table}[h]
\centering
\caption{Setting with 3 Types and 2 Messages}
\label{table:setting_3_2}

\subfloat[Best Policy]{
\label{table:best_policy32}
\begin{tabular}{|c|c|c|c|}
\hline
Bid 1 & Bid 2 & Round 1 & Round 2 \\ \hline
0 & 0 & Agent 1, price 0.59 & Agent 2, price 0.01\\ \hline
0 & 1 & Agent 2, price 0.64 & Agent 1, price 0.25 \\ \hline
1 & 0 & Agent 1, price 0.13 & Agent 2, price 0.97 \\ \hline
1 & 1 & Agent 2, price 0.45 & Agent 1, price 0.00 \\ \hline
\end{tabular}
}
\vskip1em
\subfloat[Equilibrium Strategy]{
\label{table:agent_actions32}
\begin{tabular}{l|ccc}
\toprule
Item value & 0 & 0.5 & 1 \\ \hline
\midrule
Bid agent 1 & 0 & 1 & 1 \\ \hline
Bid agent 2 & 0 & 0 & 1 \\
\bottomrule
\end{tabular}
}
\end{table}

\begin{table}[h]
\centering
\caption{Setting with 4 Types and 2 Messages}
\label{table:setting_4_2}

\subfloat[Best Policy]{
\label{table:best_policy42}
\begin{tabular}{|c|c|c|c|}
\hline
Bid 1 & Bid 2 & Round 1 & Round 2 \\ \hline
0 & 0 & Agent 2, price 0.00 & - \\ \hline
0 & 1 & Agent 2, price 0.01 & Agent 1, price 1.00 \\ \hline
1 & 0 & Agent 1, price 0.00 & - \\ \hline
1 & 1 & Agent 1, price 1.00 & Agent 2, price 0.52 \\ \hline
\end{tabular}
}
\vskip1em
\subfloat[Equilibrium Strategy]{
\label{table:agent_actions42}
\begin{tabular}{l|cccc}
\toprule
Item value & 0 & 0.33 & 0.66 & 1 \\ \hline
\midrule
Bid agent 1 & 0 & 1 & 1 & 1 \\ \hline
Bid agent 2 & 0 & 0 & 1 & 1 \\
\bottomrule
\end{tabular}
}
\end{table}

\begin{table}[h]
\centering
\caption{Setting with 5 Types and 2 Messages}
\label{table:setting_5_2}

\subfloat[Best Policy]{
\label{table:best_policy52}
\begin{tabular}{|c|c|c|c|}
\hline
Bid 1 & Bid 2 & Round 1 & Round 2 \\ \hline
0 & 0 & Agent 1, price 0.00 & - \\ \hline 
0 & 1 & Agent 1, price 0.60 & Agent 2, price 0.00 \\ \hline
1 & 0 & Agent 1, price 0.00 & - \\ \hline
1 & 1 & Agent 2, price 0.53 & Agent 1, price 0.43 \\ \hline
\end{tabular}
}
\vskip1em
\subfloat[Equilibrium Strategy]{
\label{table:agent_actions53}
\begin{tabular}{l|ccccc}
\toprule
Item value & 0 & 1 & 2 & 3 & 4 \\ \hline
\midrule
Bid agent 1 & 0 & 0 & 1 & 1 & 0 \\ \hline
Bid agent 2 & 0 & 1 & 1 & 1 & 1 \\
\bottomrule
\end{tabular}
}
\end{table}

\begin{table}[h]
\centering
\caption{Setting with 5 Types and 3 Messages}
\label{table:setting_5_3}

\subfloat[Best Policy]{
\label{table:best_policy53}
\begin{tabular}{|c|c|c|c|}
\hline
Bid 1 & Bid 2 & Round 1 & Round 2 \\ \hline
0 & 0 & Agent 1, price 0.00 & - \\ \hline 
0 & 1 & Agent 1, price 1.00 & Agent 2, price 0.26 \\ \hline
0 & 2 & Agent 1, price 0.29 & Agent 2, price 0.00 \\ \hline
1 & 0 & Agent 1, price 1.00 & Agent 2, price 1.00 \\ \hline
1 & 1 & Agent 2, price 0.59 & Agent 1, price 1.00 \\ \hline
1 & 2 & Agent 1, price 0.84 & Agent 2, price 0.00 \\ \hline
2 & 0 & Agent 1, price 1.00 & Agent 2, price 0.26 \\ \hline
2 & 1 & Agent 2, price 0.52 & Agent 1, price 1.00 \\ \hline
2 & 2 & Agent 2, price 1.00 & Agent 1, price 1.00 \\ \hline
\end{tabular}
}
\vskip1em
\subfloat[Equilibrium Strategy]{
\label{table:agent_actions53}
\begin{tabular}{l|ccccc}
\toprule
Item value & 0 & 0.25 & 0.5 & 0.75 & 1 \\ \hline
\midrule
Bid agent 1 & 0 & 0 & 0 & 0 & 0 \\ \hline
Bid agent 2 & 0 & 2 & 2 & 1 & 1 \\
\bottomrule
\end{tabular}
}
\end{table}

\begin{table}[h]
\centering
\caption{Setting with 6 Types and 3 Messages}
\label{table:setting_6_3}

\subfloat[Best Policy]{
\label{table:best_policy63}
\begin{tabular}{|c|c|c|c|}
\hline
Bid 1 & Bid 2 & Round 1 & Round 2 \\ \hline
0 & 0 & Agent 1, price 0.06 & Agent 2, price 0.28 \\ \hline 
0 & 1 & Agent 2, price 0.37 & Agent 1, price 0.98 \\ \hline
0 & 2 & Agent 2, price 0.12 & Agent 1, price 0.40 \\ \hline
1 & 0 & Agent 2, price 0.62 & Agent 1, price 0.22 \\ \hline
1 & 1 & Agent 2, price 0.31 & Agent 1, price 0.76 \\ \hline
1 & 2 & Agent 1, price 0.14 & Agent 2, price 0.87 \\ \hline
2 & 0 & Agent 1, price 0.25 & Agent 2, price 0.69 \\ \hline
2 & 1 & Agent 2, price 0.94 & Agent 1, price 0.19 \\ \hline
2 & 2 & Agent 1, price 0.65 & Agent 2, price 0.79 \\ \hline
\end{tabular}
}
\vskip1em
\subfloat[Equilibrium Strategy]{
\label{table:agent_actions63}
\begin{tabular}{l|cccccc}
\toprule
Item value & 0 & 0.2 & 0.4 & 0.6 & 0.8 & 1 \\ \hline
\midrule
Bid agent 1 & 0 & 0 & 1 & 1 & 2 & 2 \\ \hline
Bid agent 2 & 0 & 2 & 2 & 1 & 1 & 1 \\
\bottomrule
\end{tabular}
}
\end{table}

\end{document}